\pgfplotsset{compat=newest}
\newsavebox{\measure@tikzpicture}
  \def\tikz@width{#1}%
  \def\tikzscale{1}\begin{lrbox}{\measure@tikzpicture}%
  \edef\tikzscale{\pgfmathresult}%
\newcommand{\NN}[0]{\mathbb{N}}
\newcommand{\RR}[0]{\mathbb{R}}
\newcommand{\D}{\displaystyle}
\newcommand{\TCI}{\textsc{TopCycleIncrease}}
\renewcommand{\vec}[1]{\mathbf{#1}}
\newcommand{\growingmid}{\mathrel{}\middle|\mathrel{}}
\newcommand{\classNP}{\textsf{NP}}
\newcommand{\classPPAD}{\textsf{PPAD}}
\newcommand{\classFIXP}{\textsc{FIXP}}
\newcommand{\SAT}{\textsc{Satisfiability}}
\newcommand{\ThreeDM}{\textsc{3-Dimensional-Matching}}
\newcommand{\sw}{\textsc{Q}}
\newcommand{\rev}{\textsc{Flow}}
\newcommand{\vecf}{\mathbf{f}}
\newcommand{\veca}{\mathbf{a}}
\newcommand{\vecfhat}{\mathbf{\hat{f}}}
\newcommand{\vecpi}{\bm{\pi}}
\newtheorem{theorem}{Theorem}
\newtheorem{example}[theorem]{Example}
\newtheorem{proposition}[theorem]{Proposition}
\newtheorem{remark}[theorem]{Remark}
\newtheorem{corollary}[theorem]{Corollary}
\tikzstyle{vertex}=[circle, minimum size=20pt,inner sep=0pt, draw]
\tikzstyle{smallvertex}=[circle, minimum size=1pt,inner sep=0pt, draw]
\tikzstyle{edge} = [draw,thick,-stealth']
\title{Flow Allocation Games\footnote{A previous version of this paper has been published with the title ``Strategic Payments in Financial Networks''. While we receive motivation from financial networks, we study a fundamental game-theoretic approach in the context of classic flow and circulation problems that is not necessarily restricted to financial networks. To reflect this property, we changed the title to ``Flow Allocation Games''.}}
\author{Nils Bertschinger\thanks{Frankfurt Institute of Advanced Studies, Goethe University Frankfurt, Germany. \texttt{bertschinger@fias.uni-frankfurt.de}} \and Martin Hoefer\thanks{Institute for Computer Science, Goethe University Frankfurt, Germany. \texttt{mhoefer@cs.uni-frankfurt.de}} \and Daniel Schmand\thanks{Center for Industrial Mathematics, University of Bremen, Germany. \texttt{schmand@uni-bremen.de}}}
\date{}
\begin{document}
\maketitle
\begin{abstract}
We study a game-theoretic variant of the maximum circulation problem. In a \emph{flow allocation game}, we are given a directed flow network. Each node is a rational agent and can strategically allocate any incoming flow to the outgoing edges. Given the strategy choices of all agents, a maximal circulation that adheres to the chosen allocation strategies evolves in the network. Each agent wants to maximize the amount of flow through her node. Flow allocation games can be used to express strategic incentives of clearing in financial networks.

We provide a cumulative set of results on the existence and computational complexity of pure Nash and strong equilibria, as well as tight bounds on the (strong) prices of anarchy and stability. Our results show an interesting dichotomy: Ranking strategies over individual flow units allow to obtain optimal strong equilibria for many objective functions. In contrast, more intuitive ranking strategies over edges can give rise to unfavorable incentive properties.
\end{abstract}

\section{Introduction.}  

Flows and circulations in networks are a classic problem domain in combinatorial optimization. A network flow is called a feasible circulation in a graph if it maintains flow conservation at all nodes, and some given lower and maybe upper bounds on the flow on edges are fulfilled. Dinitz~\citep{Dinitz70} and Edmonds and Karp~\citep{EdmondsK72} provided strongly polynomial-time algorithms for solving the existence problem of a circulation in a network. \citet{Tardos85} even showed that the minimum-cost circulation problem can be solved in strongly polynomial time. There is a vast number of applications of flow problems, and flows give rise to beautiful and favorable mathematical and algorithmic properties. As a consequence, variants of flow and circulation problems have been investigated for decades.

In this paper, we explore a novel game-theoretic model for circulations in networks. In our model, each node belongs to a player who aims to maximize the flow through the node. We assume that players can strategically allocate the flow entering their node to outgoing edges. Given the strategy choices of all players, a maximal circulation that adheres to the chosen allocation strategies evolves in the network.

Flows and circulations have a broad range of applications to different areas. A fascinating area, where circulations play a key role, is in the analysis of financial networks and systemic risks.
A very popular approach to model financial networks has emerged from the seminal work by \citet{EisenbergN01}. In their model, the financial market can be seen as a directed graph $G=(V,E)$. The node set $V$ corresponds to the set of institutions (or \emph{firms}). The set of directed edges $E$ expresses the debt relations among firms. Each edge  $e=(u,v) \in E$ has a capacity $c_e$ that corresponds to the nominal liability of firm $u$ to firm $v$. In addition, each firm $v$ has a non-negative supply $b_v$ that corresponds to external assets. Given this networked scenario, the goal is to understand the properties of \emph{clearing}, i.e., the resulting payments when firms have to clear their debt and ``pay their bills'' to their creditors. For this clearing task, we view the graph as a flow network, where payments constitute a flow of funds or assets (which we will call \emph{money}). Each node has a non-negative supply and potentially receives additional money over its' incoming edges. It then can use all this money to allocate flow towards its' outgoing edges. As a consequence, money starts to circulate in the network. Eventually, all edges of a node become tight (and all debt is cleared), or the node runs out of funds.

In the majority of the literature, the circulation flow is governed by a static proportional (also called pro-rata) strategy for each node. The node must allocate the entire outgoing flow in proportion to the capacities of its outgoing edges. More recently, interest has emerged in more general, decentralized, and monotone allocation strategies. Notably, in an influential work, \citet{CsokaH18} consider flows based on an arbitrary integral, monotone allocation strategy for each node (which we term \emph{unit ranking} below). This extension leads to a variety of interesting questions for the resulting money circulations.

In this work, our interest lies in the decentralized and, more concretely, incentive and stability properties of the circulation problem. In our model, nodes can individually choose the flow allocation strategy for the outgoing edges. Flow allocation games of this kind have been of interest recently -- in particular, \citet{GuhaKV19} explore games, in which each agent controls the strategies of one or more nodes and wants to maximize the flow routed to an agent-specific sink in the network. Our game is closely related, with the difference that each agent corresponds to a single node, but it might neither be a source nor a sink. Instead, the goal of the node in our game is to maximize the flow circulating through the node. In a financial context, this is equivalent to the natural goal of maximizing the \emph{equity} (total assets minus total liabilities).

Each node can choose as strategy an allocation function that yields for each amount of available flow an assignment of this flow to the outgoing edges. Similar to~\citet{CsokaH18} our interest lies in monotone, ranking-based strategies. Game-theoretic variants of maximum flows, even based on rankings, have been of recent interest in economics. \citet{Fleiner2014} introduced \emph{stable flows}, which have been further developed by \citet{KiralyP13}, \citet{Cseh2013}, and \citet{CsehM19}. In stable flows, each node has an inherent preference ranking over the edges. A node $v$ strives to maximize the amount of flow on its preferred edges. Thus, a stable flow can be seen as a flow that is immune to coalitional deviations of players, i.e., nodes of a non-saturated walk can jointly decide to add flow on this walk. In our work, the ranking is not externally given, but represents a strategic decision of the players.

To the best of our knowledge, flow allocation games studied in this paper have not been addressed before. We provide a cumulative analysis of the properties of equilibria in these games. We focus on pure Nash and strong equilibria. In these equilibria, nodes have no unilateral (pure Nash) or coalitional incentives (strong) to deviate from their chosen strategies. Depending on the set of strategies, the resulting games have different properties. If we assume that strategies are restricted to priority orderings over edges, existence of a pure Nash or a strong equilibrium is not guaranteed and becomes strongly \classNP-hard to decide. Instead, if nodes can assign each unit of available flow in an arbitrary monotone fashion, a strong equilibrium always exists and can be computed in strongly polynomial time. Moreover, this strong equilibrium maximizes the total amount of flow circulating in the network (and, as such, Pareto optimizes the utility of all nodes). In addition, we show that for a diverse set of objective functions, there is a strong equilibrium that is optimal w.r.t.\ this objective (such as, e.g., minimizing the number of nodes or firms that are in default and unable to pay their liabilities).

This interesting technical dichotomy between games with different payment strategies (restricted edge-based vs.\ arbitrary integral and monotone) offers insights into the properties of financial networks. Our results show that a benevolent designer could realize a clearing mechanism with monotone unit-based payment strategies that leads to a socially optimal state, for many different notions of ``social optimum''. It comes with the additional guarantee of giving no coalition of firms an incentive to pay their debts differently. In contrast, if clearing payments are determined in a decentralized fashion resulting in some arbitrary Nash or strong equilibrium, the total amount of flow in the system can deteriorate drastically (and similarly the social quality for many objectives).

Similar problems arise if a centralized mechanism is restricted to edge-based priorities. This can lead to non-existence of pure equilibria in the resulting games. Even if pure equilibria exist, they can be undesirable since, e.g., the total amount of circulating flow can be very small. This shows a marked contrast between centralized and decentralized approaches and highlights how the structure of permissible strategies impacts the structural properties of the resulting flows.

\subsection{Contribution and outline.}
In Section~\ref{sec:model} we introduce our formal model of a flow allocation game. We focus on natural classes of ranking-based payment strategies for the nodes as introduced by~\citet{CsokaH18}. For an \emph{edge-ranking strategy}, a node ranks its outgoing edges and assigns its incoming flow in the order of the ranking\footnote{Cs\'{o}ka and Herings called them priority rules.}. As a superset of strategies, we consider \emph{unit-ranking strategies}, where flow is considered in units. Instead of edges, each node ranks single units of each edge capacity. Edge- and unit-ranking strategies both are classes of \emph{monotone strategies}, where the mapping of the available flow of a node to every outgoing edge is an arbitrary monotone function.

There can be several feasible flows for a given strategy profile. In fact, the feasible flows form a lattice with a partial order based on the total outgoing flow of each node~\cite{CsokaH18}. There is a unique feasible flow that forms the supremum of the lattice -- it pointwise maximizes the outgoing flow to each node (for the given profile of monotone strategies).
We assume that this supremum flow is the clearing flow and defines the utility of each node in the game. After discussing some structural insights on the feasible flows in Section~\ref{sec:clearing}, we show in Section~\ref{sec:topCycle} that the supremum flow can be computed in strongly polynomial time (Proposition~\ref{prop:clearingAlgo}) for profiles of unit-ranking strategies. 

In Section~\ref{sec:coin} we study unit-ranking games, in which all nodes use unit-ranking strategies. Our interest lies in the existence, computational complexity, and social quality of equilibria. We show that in every such game there exists a strategy profile that represents a strong equilibrium, in which no coalition of nodes has an incentive to deviate (Theorem~\ref{thm:coinSPoS1}). Furthermore, there are even strong equilibria that represent a state that maximizes the circulation flow. Moreover, for a large variety of natural notions of social objective functions (such as the sum, the minimum, or the geometric mean of all utilities, the number of fully saturated nodes, etc.) there is a strong equilibrium that maximizes the objective over all possible circulations (Corollary~\ref{cor:coinSPoS1}). 

For simplicity, in the remainder of the paper we then focus on one standard objective in circulation problems, the total amount of flow circulating in the network. A strong equilibrium that maximizes the circulation can be computed in strongly polynomial time (Theorem~\ref{thm:coinSPoS1}). It can even be represented using a number of bits polynomial in the size of the input (i.e., the size of the graph and the size of all numbers in logarithmic encoding), even though every strategy shall rank all single flow units that a node might have available, and their number could be pseudo-polynomial (i.e., linear in the unary encoding size of the input). In contrast, it is strongly \classNP-hard to find a best-response strategy for a single node in a given arbitrary strategy profile of a unit-ranking game (Theorem~\ref{thm:bestRespNPC}).

For worst-case equilibria and the strong price of anarchy, we show that the deterioration of the circulation in a strong equilibrium compared to a max-circulation is tightly characterized by the min-max length of cycles in any max-circulation (Theorem~\ref{thm:spoaD}). This implies that in networks, in which an optimal circulation is composed of small cycles, we see a small inefficiency in strong equilibria. In contrast, the circulation of a worst-case Nash equilibrium, which is stable only against unilateral deviations, can be arbitrarily worse than in an optimum, even in simple games with a constant number of nodes (Proposition~\ref{prop:poaUnbounded}).

In Section~\ref{sec:edge} we study a natural and interesting restriction on the strategies and analyze edge-ranking games, in which all nodes are restricted to edge-ranking strategies. Restricting the strategy space to rankings over edges can have devastating consequences for the existence of equilibria and the amount of circulating flow in an equilibrium. In edge-ranking games, pure Nash and strong equilibria can be absent, and deciding their existence is strongly \classNP-hard (Theorem~\ref{thm:edgeExistNPC}). The same hardness applies for computing an optimal strategy profile, and for computing a pure Nash or strong equilibrium when it is guaranteed to exist. Even the best strong equilibrium can be a factor of $\Omega(n)$ worse than an optimum in terms of the circulating flow (Proposition~\ref{prop:poa}). For pure Nash equilibria, even the best one can be arbitrarily worse than an optimum (Proposition~\ref{prop:posExtrnalInflow}).

We conclude in Section~\ref{sec:conclude} with a summary of the main findings, a discussion of our results, and directions for future work.

\subsection{Related work.}
\label{sec:related}

Flow allocation games are based on circulations in financial network models that emerged from~\cite{EisenbergN01}. Rather than proportional payments,~\citet{CsokaH18} analyze edge- and unit-ranking strategies as well as arbitrary monotone strategies. They analyze the structure of clearing flows and show that they constitute a complete lattice. For completeness, in Section~\ref{sec:model} we recapitulate these findings using our notation.

\paragraph{Flow games.}
Our game-theoretic approach is related to a number of existing game-theoretic models based on flows in networks. In cooperative game theory, there are several notions of flow games based on a directed flow network. Existing variants include games, where edges are players~\cite{DubeyS84,KalaiZ82,KalaiZ82a,GranotG92,DengIN99,BachrachR09}, or each player owns a source-sink pair~\cite{Papadimitriou01,MarkakisS05}. The total value of a coalition $C$ is the profit from a maximum (multi-commodity) flow that can be routed through the network if only the players in $C$ are present. There is a rich set of results on structural characterizations and computability of solutions in the core, as well as other solution concepts for cooperative games. In contrast to our work, these games are non-strategic. We consider each player as a single node with a strategic decision about flow allocation.

More recently, a class of strategic flow games has been proposed in~\citep{GuhaKV19,KupfermanVV17}. There is a capacitated flow network with a set of source nodes. At each source node, a given amount of flow enters the network. Each node of the network is owned by a single player. Each player always owns a designated sink node, as well as one or more additional nodes from the network. A player can choose a flow strategy for each of her nodes. The flow strategy specifies, for every node $v$ and every $x \ge 0$, how an incoming flow of $x$ at $v$ is distributed onto the outgoing edges (if any). Each flow strategy needs to fulfill flow conservation constraints at every node, subject to capacity on the outgoing edges. Each player aims to maximize the incoming flow at its sink node.

For these games there exist a number of $\Sigma_2^p$-completeness results for, e.g., determining the value of a game in a two-player Stackelberg variant, or determining the existence of a pure Nash equilibrium in a multi-player variant. In the latter game, computing a best response can also be \classNP-hard. Our approach is related to these games. However, motivated by financial networks we assume each firm is a single node. The firm optimizes the incoming flow at its node (without it being a designated sink node). We study the computational complexity and social quality of equilibria. Moreover, strategic incentives arise mainly from cycles in the network (see Section~\ref{sec:DAGs} below) -- a condition absent in the existing work on max-flow games~\citep{GuhaKV19,KupfermanVV17} where the network is assumed to be acyclic.

The problem of computing a clearing state for a given strategy profile in our games is closely related to the notion of a stable flow studied in~\citep{Fleiner2014,CsehM19}. In the stable flow problem, each node is equipped with an intrinsic preference order over both incoming and outgoing arcs. The goal is to route as much flow as possible over most preferred arcs. There always exists a stable flow, where no group of agents all can benefit from rerouting the flow along a walk, and such a flow can be computed in polynomial time. The set of stable flows forms a lattice. The model has been extended to an over time setting~\citep{Cseh2013} and to a multi-commodity variant~\citep{KiralyP13}.

\paragraph{Financial networks.} 
We consider issues of strategic choice and computational complexity in flow allocation games. Flow allocation games have a strong connection to clearings in financial networks. There have been works addressing computational complexity of diverse issues in financial networks, such as pricing options with~\citep{AroraBBG11} and without information asymmetry~\citep{BravermanP14}, finding clearing payments with credit default swaps~\citep{SchuldenzuckerSB17}, or estimating the number of defaults when providing a shock in the financial system~\citep{HemenwayK16}. 

In addition, many extensions to the model by Eisenberg and Noe have been proposed in the literature on financial networks. However, even models including cross-holdings of equity~\citep{Suzuki02}, default costs~\citep{RogersV13}, or debt contracts of different seniorities~\citep{Fischer14} follow the idea of the basic approach that all contracts have to be cleared consistently, i.e., clearing payments locally adhere to the rather mechanical clearing rule and constitute a fixed point solution globally.
Indeed, \citet{BaruccaBCDVBC20} have shown that many of the above models can be unified in terms of self-consistent network valuations. A well-known result of such models is the ``robust-yet-fragile'' property exhibited by financial networks, i.e., contagion arises in an all-or-nothing fashion akin to the formation of a giant connected component in random graph models \citep{GaiS10}. This provides important insights into systemic risk and advises the need for macro-prudential regulation.

An extended abstract of this paper has appeared in the proceedings of the ITCS 2020 conference~\cite{BertschingerHS20}. Subsequent to publication of the extended abstract, there has been a significant interest in extending our results and analyzing closely related computational problems in the context of financial networks. \citet{KanellopoulosKZ21} study flow allocation games with credit default swaps (CDS). They obtain a number of \classNP-hardness results for equilibrium existence. Instead of CDSes, \citet{HoeferW22} consider extensions to seniorities (expressed by a class of threshold strategies) along with minimal clearing based on the infimum (rather than the supremum) of feasible flows. For endogeneous seniorities, they extend our main existence result for strong equilibria; for exogenenous seniorities, they observe \classNP-hardness results for equilibrium existence. 

More fundamentally, Ioannidis et al.~\citep{IoannidisKV22ICALP, IoannidisKV23} prove \classFIXP-completeness results for computing a strong approximation of clearing states with debts and CDSes, contrasting \classPPAD-completeness for weaker notions of approximation~\citep{SchuldenzuckerSB17}. Further hardness results for computing optimal clearing states are provided by~\citet{PappW21WINE}. They also study the question of computing the set of banks that default in such networks~\citep{PappW21}. Interestingly, for networks with CDSes, the point in time at which a bank announces to default plays an important role. The same authors analyze incentives for adjusting the network structure to optimize individual funds in the clearing state by deleting an incoming edge or gifting some money to other banks~\citep{PappW20a}. Similar questions of optimally removing debt contracts or bail-outs are analyzed by \citet{KanellopoulosKZ22}. In a related spirit, the effects of \emph{debt swaps} of incoming payment obligations by two banks are analyzed in~\citep{PappW21EC,FroeseHW23}.

In a broader context, strategic and financial aspects of networks have received substantial attention over the last decade. A related, yet orthogonal, body of work considers trading networks~\cite{HatfieldKNOW13,Ostrovsky08}. These models are closely related to two-sided matching under preferences and the study of competitive equilibrium. Rather than circulation effects, agents strive to establish profitable trades with their neighbors. Depending on the model variant, this results in agents matching into pairs, exchanging goods, or establishing upstream/downstream relations with suppliers and customers. The analysis of these networks usually addresses similar issues as the ones we consider here, such as existence, structure, and computational complexity of equilibria. Equilibrium computation in trading networks has recently been studied in, e.g.,~\cite{CandoganEV21,FleinerJST23}. These works also contain a good overview of related work and pointers into the existing literature.

\section{Flow allocation games.}
\label{sec:model}

\subsection{Network model, monotone strategies, and utilities.}

\paragraph{Network model.}
In a \emph{flow allocation game} $\Gamma=(G,(b_v)_{v \in V},(c_e)_{e \in E})$ we are given a graph $G=(V,E)$ with a node set $V$ and a set of directed edges $E$. Each node $v \in V$ corresponds to a \emph{player} and has a \emph{fixed supply} $b_v \ge 0$. The \emph{capacity} $c_e \ge 0$ of an edge $e = (u,v)$ is the maximum amount of flow that $u$ can forward to $v$. In terms of financial networks, $b_v$ is the amount of external assets of $v$, and $e$ represents a liability of value $c_e$ of firm $u$ to firm $v$. We follow standard notation in graph theory and denote by $E^+(v) = \{(v,u) \in E\}$ and by $E^-(v) = \{ (u,v) \in E\}$ the set of outgoing and incoming edges of $v \in V$, respectively. The \emph{saturating output} $c^+_v$ of a node $v$ is the maximum amount of flow $v$ can send to other players, specified by the weighted outdegree
\[ c^+_v = \sum_{e \in E^+(v)} c_e.\]
We strive to analyze issues of computational complexity. As such, we will assume that all numbers in the input, i.e., all $b_v$ and $c_e$, are integer numbers in binary encoding.

\paragraph{Allocation strategies.}
We analyze allocation strategies in flow allocation games. Each player $v \in V$ strategically allocates its \emph{total supply}, i.e., the fixed supply $b_v$ plus the total amount of incoming flow, to the outgoing edges $E^+(v)$. More formally, each player $v \in V$ chooses as a \emph{strategy} a parametrized flow allocation function $a_e: \mathbb{R}_{\geq 0} \rightarrow \mathbb{R}_{\geq 0}$ for every outgoing edge $e \in E^+(v)$. This function specifies an amount of flow $a_e(y)$ that is forwarded to the edge $e$, for every $y \in \mathbb{R}_{\ge 0}$. Here, $y$ represents a possible value of total supply. Intuitively, the strategy specifies for every possible value $y \ge 0$ of total flow that $v$ might have available, how $v$ will allocate this flow to the outgoing edges. The strategy $\veca_v = (a_e)_{e \in E^+(v)}$ of player $v \in V$ must satisfy for every $y \ge 0$ and $e \in E^+(v)$
\begin{eqnarray}&0 \le a_e(y) \le c_e\;, & \hspace{.3cm} \text{(capacity constraint)}\\[0.2cm]
                &\D \sum_{e \in E^+(v)} a_e(y) \leq y\;, & \hspace{.3cm} \text{(weak flow conservation constraint)}\\[0.2cm]
                &\D \sum_{e \in E^+(v)} a_e(y) = \min\{c^+_v, y\}\;. & \hspace{.3cm} \text{(no-fraud constraint)} \label{eq:nofraud}
\end{eqnarray}
The capacity constraint ensures that no edge is overused, and the weak-flow conservation constraint ensures that $v$ cannot generate additional flow. The no-fraud constraint ensures that each player forwards as much of its total supply as possible. No-fraud strategies are desirable in the application of financial networks, since they do not allow financial firms to hide or malversate assets while having unpaid debt (hence the name ``no-fraud'')\footnote{For example, \citet{EisenbergN01} restrict the choice of each player to a specific no-fraud strategy with \emph{pro-rata} payments: Each node $v$ distributes its total supply in proportion to the edge capacities. Formally, for every edge $e \in E^+(v)$ the strategy is fixed to $a_e(y) = \min\left\{ c_e , y \cdot \frac{c_e}{c^+_v} \right\}$.}. We assume this property in the strategies for simplicity. Note that even if we drop the no-fraud condition as a constraint, then in flow allocation games (with utilities resulting from monotone strategies defined below) it turns out that every player always has a best response that satisfies the no-fraud condition (see Section~\ref{sec:clearing} below).

A \emph{flow} is a vector of edge valuations $\vecf = (f_e)_{e \in E}$. Given some flow $\vec f$, we slightly abuse notation and denote the total supply of $v$ by $f_v = b_v + \sum_{e \in E^-(v)}f_e$. For a \emph{strategy profile} $\veca = (\veca_e)_{e \in E}$, a flow is called \emph{feasible} if for all edges $(v,w) \in E$ it holds
\begin{equation}
f_{(v,w)} = a_{(v,w)}(f_v)\;. \hspace{1cm} \text{(fixed point constraint)} 
\end{equation}
A node is called \emph{fully saturated} if the feasible flow saturates all outgoing edges, i.e., $f_e = c_e$ for all $e \in E^+(v)$, or equivalently $f_v = c^+_v$. Given some feasible flow $\vec f$ for a strategy profile $\vec a$, the \emph{utility} of player $v$ is defined by $u_v(\vec a, \vec f) = \sum_{e \in E^+(v)} f_e$, i.e., $v$'s goal is to choose a strategy to maximize the total outgoing flow. Capacity and weak flow conservation constraints are not sufficient for a consistent definition of utility. In particular, even if a strategy profile satisfies capacity and weak flow conservation constraints, it may not allow a feasible flow. 

\begin{example} \rm
\label{example_noFeasibleFlow}
Consider a graph depicted in Fig.~\ref{fig:noFlow} with three nodes $V=\{v_1,v_2,v_3\}$ and edges $(v_1,v_2)$, $(v_2,v_1)$ and $(v_1,v_3)$. Let $c_e = 2$ for all edges $e \in E$. For the strategy profile $\vec a$, we assume 
\begin{align*}
	v_2 \text{ plays } &a_{(v_2,v_1)}(y) = y\\[0.2cm]
	v_1 \text{ plays } &a_{(v_1,v_2)}(y) = \begin{cases} y & \text{ for } y \le 1\\
		0  & \text{ for } y > 1\end{cases}\\
	&a_{(v_1,v_3)}(y) = y - a_{(v_1,v_2)}(y) \text{ for } y \ge 0.
\end{align*}
Informally, $v_1$ forwards the total supply to $v_2$ if it is at most 1. Otherwise, $v_1$ forwards all supply to $v_3$. The fixed supply is $b_{v_1} = 1$ and 0 for the other nodes.
\begin{figure}[ht]
	\begin{center}
		\begin{scaletikzpicturetowidth}{0.5\textwidth}
			\begin{tikzpicture}[scale=\tikzscale,auto,swap]
				\node[vertex,label=above:\fbox{1}](v1) at (0,0){$v_1$};
				\node[vertex](v2) at (3,0){$v_2$};
				\node[vertex](v3) at (1.5,1.5){$v_3$};
				\path[edge] (v1) edge [bend left = 20,above] node{$2$} (v2);
				\path[edge] (v2) edge [bend left = 20,below] node{$2$} (v1);
				\path[edge] (v1) edge [left] node {$2$}(v3);
			\end{tikzpicture}
		\end{scaletikzpicturetowidth}
	\end{center}
	\caption{The graph used in Example \ref{example_noFeasibleFlow}.}
	\label{fig:noFlow}
\end{figure}

There is no feasible flow for strategy profile $\vec a$. Suppose the total supply of $v_1$ is 1, then $v_1$ routes all supply to $v_2$, who forwards it back to $v_1$. The total supply of $v_1$ must be at least 2, a contradiction. Suppose the total supply of $v_1$ is more than 1, then $v_1$ forwards all supply to $v_3$, so $v_1$ only has the fixed supply. The total supply of $v_1$ must be 1, a contradiction. \hfill $\blacksquare$
\end{example}

The main problem with feasible flows in the example is that the strategies are not \emph{monotone}. 

\paragraph*{Monotonicity.}
Monotonicity is a natural condition for payment strategies. In a flow allocation game with \emph{monotone strategies}, each player $v \in V$ forwards the flow in a monotone fashion. Monotone strategies are characterized by capacity and weak-flow conservation constraints, and, for every $y,y' \in \RR_{\geq 0}$ with $y \geq y'$
\begin{equation}
\label{eq:monotone}
a_e(y) \geq a_e(y')\;. \hspace{1cm} \text{(monotonicity constraint)} 
\end{equation}

Monotone strategies have been proposed and studied before by \citet{CsokaH18}. In the following theorem, we recapitulate their main structural insight on feasible flows. Consider a flow-allocation game and a strategy profile $\veca$ of monotone strategies. Let $\mathcal{F}$ be the set of feasible flows for $\veca$. We observe that $\mathcal{F}$ is non-empty, i.e., a feasible flow always exists. Moreover, $(\mathcal{F}, \le)$ forms a lattice with the coordinate-wise comparison. Formally, $\vecf \le \vecf'$ iff $f_e \le f'_e$ for all $e \in E$; and $\vecf < \vecf'$ iff $f_e \le f'_e$ for all $e$ and $f_e < f'_e$ for at least one edge $e$. 

\begin{theorem}[\cite{CsokaH18}]
\label{thm:clearingLattice}
For every strategy profile $\veca$ in a flow-allocation game with monotone strategies, the pair $(\mathcal{F}, \le)$ is a non-empty complete lattice.
\end{theorem}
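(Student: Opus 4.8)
The plan is to show that the set $\mathcal{A}$ of feasible clearing states is closed under coordinate-wise maximum and minimum, which is the standard route to establishing a lattice structure. Given two clearing states $\veca, \veca' \in \mathcal{A}$, define $\vec{b} = \max\{\veca, \veca'\}$ coordinate-wise (and symmetrically $\vec{b}' = \min\{\veca,\veca'\}$) and argue that $\vec{b}$ is again a clearing state. If that succeeds, then $\vec{b}$ is the least upper bound and $\vec{b}'$ the greatest lower bound with respect to the partial order $\ge$, since any coordinate-wise maximum/minimum is automatically the join/meet whenever it lies in $\mathcal{A}$. The nontrivial content is entirely in verifying the fixed-point constraint for $\vec{b}$.

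First I would fix a node $v$ and compare the two clearing states. Because $\ge$ is coordinate-wise, for each $v$ we have $b_v = a_v$ or $b_v = a'_v$ (whichever is larger), so $\vec b$ agrees with one of the two states in each coordinate but the choice of which state may differ from node to node. The goal is to verify
\[
b_v = a^x_v + \sum_{e = (u,v) \in E^-(v)} f_e(b_u)
\]
for every $v$. The key tool is the monotonicity constraint: since $f_e$ is monotone in the available assets of the tail node, and $b_u \ge a_u$ as well as $b_u \ge a'_u$ for every $u$, we get $f_e(b_u) \ge f_e(a_u)$ and $f_e(b_u) \ge f_e(a'_u)$ on every incoming edge. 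The plan is to use these inequalities to sandwich the right-hand side and conclude equality.

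The main obstacle, and where I would spend the most care, is turning these edge-wise monotonicity inequalities into an exact fixed-point identity rather than merely an inequality. Suppose $b_v = a_v$ (the case $b_v = a'_v$ is symmetric). Then by monotonicity the proposed inflow at $v$ under $\vec b$ is at least the inflow under $\veca$, so $a^x_v + \sum_{e=(u,v)} f_e(b_u) \ge a^x_v + \sum_{e} f_e(a_u) = a_v = b_v$; hence $\vec b$ is a \emph{superfixed point} (inflow at least $\vec b$ everywhere). To upgrade this to equality, I would invoke Proposition~\ref{prop:sFull}: both $\veca$ and $\veca'$ route all external assets to the source and saturate every $(s,v)$ edge, so $a_s = a'_s = \sum_{v} a^x_v$, a fixed quantity. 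Summing the per-node inequalities over all $v$, the total inflow into the real nodes under $\vec b$ cannot exceed what the (fixed) total external assets plus conservation permit; combining the global conservation identity with the node-wise ``$\ge$'' forces every inequality to be tight, yielding the fixed-point constraint at each $v$. Thus $\vec b \in \mathcal{A}$, and by the same argument $\vec b' \in \mathcal{A}$, completing the lattice proof.
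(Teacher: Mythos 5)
Your proof is correct, and it takes a genuinely different route from the paper's. The paper's proof is a one-line application of the Knaster--Tarski theorem: the clearing map $g(\veca)_v = a^x_v + \sum_{e=(u,v)\in E^-(v)} f_e(a_u)$ is monotone on the ambient box $A$ of asset vectors, so its fixed-point set $\mathcal{A}$ is a (complete) lattice. You instead show that $\mathcal{A}$ is closed under coordinate-wise max and min, and both of your key steps hold up: monotonicity gives $g(\vec{b})\ge\vec{b}$ for $\vec{b}=\max\{\veca,\veca'\}$, and tightness follows because summing the no-fraud constraint over all firms yields $\sum_v g(\vec{b})_v = \sum_v a^x_v + \sum_v \min\{b_v,\ell(v)\}$, so the node-wise inequalities give $\sum_v \max\{0,\,b_v-\ell(v)\}\le\sum_v a^x_v$, while Proposition~\ref{prop:sFull} applied to $\veca$ together with $b_v\ge a_v$ gives the reverse inequality $\sum_v \max\{0,\,b_v-\ell(v)\}\ge \sum_v \max\{0,\,a_v-\ell(v)\}=\sum_v a^x_v$; equality throughout forces $g(\vec{b})=\vec{b}$, and the min case is symmetric. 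Comparing what each approach buys: yours is stronger in one respect, since it shows $\mathcal{A}$ is a \emph{sublattice} of the product order, i.e., join and meet are literally the coordinate-wise max and min --- something Knaster--Tarski does not assert (for a general monotone map, the join of two fixed points inside the fixed-point lattice can lie strictly above their coordinate-wise maximum; it is the specific conservation structure of money flow games, which you exploit via Proposition~\ref{prop:sFull}, that rules this out). The paper's route is stronger in another respect: Knaster--Tarski yields that $\mathcal{A}$ is a \emph{nonempty} complete lattice, hence the greatest clearing state $\vecahat$ exists, which is the fact the paper actually uses downstream; your argument as written gives neither nonemptiness nor a maximum element when $\mathcal{A}$ is infinite (as it can be for real-valued monotone strategies), although your conservation argument extends verbatim to arbitrary coordinate-wise suprema, so completeness is recoverable with one more sentence. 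Finally, note that your proof leans on the no-fraud constraint (through Proposition~\ref{prop:sFull}), whereas the paper's argument needs only monotonicity of the strategies.
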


Csoka and Herings state the result only for \emph{integral} monotone strategy profiles $\veca$. We here reiterate the proof to show that it works for all monotone ones. This theorem motivates the study of flow allocation games with monotone strategies. It follows using the Knaster-Tarski theorem.

\begin{theorem}[\cite{tarski1955lattice}]
\label{thm:KnasterTarski}
Let $(L,\leq)$ be any complete lattice. Suppose $g : L \to L$ is order-preserving, i.e., for all $x,y \in L$ we have that $x \leq y$ implies $g(x) \leq g(y)$. Then the set of all fixed points of $g$ is a non-empty, complete lattice with respect to $\leq$.
\end{theorem}

\begin{proof}[Proof of Theorem \ref{thm:clearingLattice}.]
Consider $F = \left\{ \vecf \growingmid 0 \le f_e \le c_e, \hspace{0.15cm} \forall e \in E \right\}$, a compact superset of all possible flow vectors. Obviously, $(F,\leq)$ forms a complete lattice with the coordinate-wise comparison defined above. For a given strategy profile $\veca$, the map $g : F \to F$ with
\begin{equation}
\label{eq:fixedPoint}
g(\vecf)_{(u,v)} = a_{(u,v)}\left(\sum_{e' \in E^-(u)}f_{e'} + b_u\right) \hspace{1cm} \text{ for every } (u,v) \in E
\end{equation}
is an order-preserving function for every edge $e \in E$, since the strategies are monotone. Obviously, the set of feasible flows $\mathcal{F}$ is the set of fixed-points of $g$. The result follows by applying the Knaster-Tarski theorem.
\end{proof}

\paragraph{Clearing states and utilities.}
We consider feasible flows that arise due to the strategic flow allocation decisions. We determine the utility $u_v(\veca, \vecf)$ of a player $v$ by using a feasible flow $\vecf$ for the given strategy profile $\vec a$. We focus on monotone strategies in order to guarantee the existence of at least one feasible flow. This is a necessary condition to make the game well-defined. In many cases, for a fixed strategy profile $\veca$, there is a unique feasible flow $\vecf$. However, even in very special cases, there might be infinitely many feasible flows for the same strategy profile\footnote{For example, consider a simple cycle with two nodes $v$ and $w$. The capacities $c_{(v,w)} = c_{(w,v)} = 1$, the fixed supplies $b_v = b_w = 0$, and the strategies $a_{(v,w)}(y) = a_{(w,v)}(y) = y$. Every flow with $f_{(v,w)} = f_{(w,v)} \in [0,1]$ is a feasible flow.} $\veca$.

Based on these properties, and similarly to the vast majority of the literature, we concentrate on the supremum of the lattice. For every profile $\veca$, we focus on the unique feasible flow that maximizes the total flow in the network. We denote this feasible flow by $\vecfhat$ and call it the \emph{clearing state} of strategy profile $\veca$. The clearing state $\vecfhat$ determines the utility of every player in $\veca$, i.e., 
\[ u_v(\veca) = u_v(\veca, \vecfhat) =  \sum_{e \in E^+(v)} \hat{f}_e \enspace.\]
Clearly, the clearing state $\vecfhat$ and the resulting utilities significantly depend on the strategy choices of the nodes in the profile $\veca$. 

\subsection{Ranking-based strategies.}
In this paper we are interested in classes of intuitive, expressive, and meaningful monotone strategies. We concentrate on strategies that can be derived via \emph{rankings}~\cite{CsokaH18}.

\paragraph{Unit-ranking strategies.}
In unit-ranking games, we rely on integrality of all values for $c_e$ and $b_v$, and choose the strategies such that the feasible flows will be integral. We define the parametrized flow functions $a_e(y)$ for every outgoing edge $e \in E^+(v)$ on the non-negative integer numbers $a_e(y): \NN_0 \rightarrow \NN_0$. Note that Theorem~\ref{thm:clearingLattice} can also be shown for $a_e(y): \NN_0 \rightarrow \NN_0$. The proof is analogous, where we define $F$ on integrals and replace \emph{compact} by \emph{finite}. This ensures that unit-ranking strategies are well-defined in our model. We interpret the flow as being discretized into unsplittable ``units'' or ``particles'' of size 1.

Unit-ranking strategies are relevant for the application of flow-allocation games to financial networks. Usually, currencies have some smallest indivisible amount of money. In this way, unit-ranking strategies provide a rich and powerful class of strategies that can be used to express payments strategies in this context.

\paragraph{Edge-ranking strategies.}
In a flow-allocation game with edge-ranking strategies, each player $v \in V$ forwards its total supply according to a strict and total order over $E^+(v)$, which we represent by a permutation $\pi_v = (e_1,e_2,\ldots)$. Player $v$ first allocates the maximum possible flow to edge $e_1 = \pi_v(1)$, then $e_2 = \pi_v(2)$, etc.\ until all edges are at their capacity or $v$ has no supply left. Formally, $a_{e_i}(y) = \min\{ c_{e_i},\max\{ 0, y - \sum_{j < i} c_{e_j} \}\}$. The \emph{edge-ranking strategy}\footnote{These strategies have also been termed \emph{singleton liability priority lists} in~\citep{IoannidisKV23}.} of $v$ is fully described by the ranking $\pi_v$, hence we denote a strategy profile in edge-ranking games by $\vecpi = (\pi_v)_{v \in V}$.\\

Edge-ranking strategies are a special case of unit-ranking strategies in the sense that every edge-ranking strategy can be written as a unit-ranking strategy. Maybe counterintuitively, every unit-ranking game is also a special edge-ranking game -- replacing each edge $e$ with capacity $c_e$ many multi-edges of unit capacity expands a unit-ranking game into an equivalent edge-ranking game. There is a one-to-one correspondence between unit-ranking strategies in the original game and edge-ranking strategies in the expanded game. Intuitively, for a unit-ranking strategy in the original game, a player $v$ assigns the first particle of flow to the multi-edge $\pi_v(1)$, the second particle to $\pi_v(2)$, etc.\ in the expanded edge-ranking game until all outgoing edges are saturated or $v$ runs out of supply. The expansion of the game implies a pseudo-polynomial blowup in representation size. Nevertheless, the structural equivalence turns out to be very useful for characterizing and analyzing feasible flows and equilibria in unit-ranking games.

Note that the trivial representation of a unit-ranking strategy might be pseudo-polynomial in the size of the original non-expanded game, since edge capacities are given in binary encoding. We will address this issue briefly in Theorem~\ref{thm:coinSPoS1} when we discuss polynomial-time computation of equilibria. It turns out that there always exist equilibria with a representation that is polynomial in the input size of the game.

\subsection{Equilibria and Social Quality}
\paragraph{Equilibrium concepts.} 
We study pure Nash and strong equilibria of flow allocation games. A \emph{(pure) Nash equilibrium} in a flow allocation game is a strategy profile $\veca$ such that no player $v$ has an incentive to unilaterally deviate from the strategy $\veca_v$. More formally, in a pure Nash equilibrium $\veca$ we have $u_v(\veca) \ge u_v(\veca'_v, \veca_{-v})$ for every player $v \in V$ and every strategy $\veca'_v$. Here $\veca_{-v}$ denotes the reduced profile composed of all entries of $\veca$ except the entries for player $v$.

For the definition of a \emph{strong equilibrium}, we first define the notion of a profitable deviation of a coalition. A coalition $C \subseteq V$ of nodes has a \emph{profitable deviation} $\veca'_{C} = (\veca'_v)_{v \in C}$ if upon joint deviation of $C$ to $\veca'_C$, the resulting utility in the new profile $(\veca'_{C}, \veca_{-C})$ is strictly better for every player in $C$, i.e., $u_v(\veca'_{C}, \veca_{-C}) > u_v(\veca)$ for every $v \in C$. Here $\veca_{-C}$ denotes the reduced profile composed of all entries of $\veca$ except the ones for players $v \in C$. A strategy profile $\veca$ is a \emph{strong equilibrium} if no coalition $C \subseteq V$ has any profitable deviation. Thus, by definition, a strong equilibrium is a Nash equilibrium.

In general, pure Nash or strong equilibria might not exist in a flow allocation game. If they are guaranteed to exist, they might not be unique. 

\paragraph{Prices of anarchy and stability.}
In addition to existence and computational complexity, we also quantify the performance of a feasible flow in equilibrium in terms of natural notions of social quality. For any non-negative objective function $\sw(\veca)$ measuring the quality of a strategy profile, we rely on standard notions of \emph{price of anarchy} and \emph{price of stability} to relate the quality in equilibrium to the one that could be obtained in a strategy profile that maximizes $\sw$.

The \emph{price of anarchy} for an equilibrium concept in a game $\Gamma$ is given by the ratio 
\begin{equation}
	\label{eq:poa}
	\max_{\veca \in \mathcal{A}_{\text{Eq}}(\Gamma)} \;\; \frac{\sw(\veca^*)}{\sw(\veca)} \quad = \quad \frac{\sw(\veca^*)}{\min_{\veca \in \mathcal{A}_{\text{Eq}}(\Gamma)} \sw(\veca)}\enspace.
\end{equation}
Here $\mathcal{A}_{\text{Eq}}(\Gamma)$ is a set of equilibria (e.g., the set of all Nash equilibria, or the set of all strong equilibria) of the game $\Gamma$, and $\veca^*$ is a strategy profile maximzing $\sw$. The price of anarchy for a class of games is the largest price of anarchy of any game in the class. The \emph{price of stability} for an equilibrium concept in a game $\Gamma$ is defined by replacing $\max$ with $\min$ and vice versa in~\eqref{eq:poa}. The price of stability for a class of games is the largest price of stability in any game in the class. Note that both prices of anarchy and stability are at least 1.

Intuitively, an upper bound of $\rho$ on the price of anarchy implies that \emph{every equilibrium} has a quality of at least $\sw(\veca^*)/\rho$ in \emph{every game} of the class. A lower bound of $\rho$ implies that for \emph{some game} there is \emph{some equilibrium} in that game with quality at most $\sw(\veca^*)/\rho$. Similarly, an upper bound of $\rho$ on the price of stability implies that \emph{at least one equilibrium} has a quality of at least $\sw(\veca^*)/\rho$ in \emph{every game} of the class. A lower bound of $\rho$ implies that for \emph{some game} it holds that \emph{every equilibrium} in that game with quality at most $\sw(\veca^*)/\rho$.

Our main result for unit-ranking games shows the existence of optimal strong equilibria. More in detail, the price of stability is 1 for pure Nash and strong equilibria for every quality function from a large class of so-called \emph{flow-monotone functions} $\sw$ (for details see Corollary~\ref{cor:coinSPoS1} below). 

For the remaining results on prices of anarchy and stability in this paper, we concentrate on the total amount of flow, which is a standard objective for circulation and flow allocation problems. Formally, the \emph{total (amount of) flow} for a strategy profile $\veca$ is
\begin{equation}
	\label{eq:flowSW}
	\rev(\veca) \; = \;  \sum_{e \in E} \hat{f}_e = \sum_{v \in V} \sum_{e \in E^+(v)} \hat{f}_e = \sum_{v \in V} u_v(\veca) \enspace,
\end{equation}
since the clearing state $\vecfhat$ determines the utilities for all players. Hence, $\rev(\veca)$ also represents the \emph{utilitarian welfare}.

\subsection{Games on DAGs}
\label{sec:DAGs}

Flow allocation games are designed to analyze incentives in networks with circulation flows. In contrast to previous work~\cite{GuhaKV19, KupfermanVV17}, incentives in our games are inherently connected to cycles. To see this, we briefly discuss games on directed acyclic graphs (DAGs). 
	
\begin{proposition}
	In a flow allocation game on a DAG $G$, every strategy profile is a strong equilibrium.
\end{proposition}
	
\begin{proof}
We prove the statement inductively. Consider an arbitrary profile $\veca$. We first discuss a natural algorithm to construct a feasible flow. A DAG contains a node $v$ without incoming edges. Clearly, the incoming assets of $v$ are fixed to $b_v$. Thus, every strategy $\veca_v$ allocates an amount of $\min\{c_v^+,b_v\}$ to the outgoing edges. This is independent of the behavior of $v$ or any other player, so $\veca_v$ represents a best response. The algorithm then constructs an equivalent network for the remaining players by (1) removing $v$ and all its outgoing edges, and (2) increasing $b_w$ by $a_{(v,w)}(b_v)$, for every $(v,w) \in E^+(v)$. The new network is again a DAG and contains a node without incoming edges. In this way, the algorithm proceeds until the remaining network contains no edges.
	
To show that $\veca$ is a strong equilibrium, consider any coalition $C$ and the first player $v_1 \in C$ that was processed by our algorithm above. $v_1$ and all players being processed after $v_1$ cannot increase the incoming assets of $v_1$ by changing their strategies. As such, $v_1$ has no incentive to deviate from $\veca_{v_1}$. $C$ has no profitable deviation. Therefore, $\veca$ is a strong equilibrium.
\end{proof}
	
Our arguments do not rely on monotonicity -- the algorithm constructs a feasible flow even for non-monotone strategies $\veca$. Moreover, by the same induction, for any given strategy profile $\veca$ the feasible flow is unique. The proposition holds even for flow allocation games with non-monotone strategies. Since \emph{every} strategy profile is a strong equilibrium, it is stable against arbitrary deviations of coalitions. Hence, the result continues to hold even when we restrict to games on DAGs with monotone unit- or edge-ranking strategies.

\section{Properties of feasible flows.}

We observe a useful \emph{circulation representation} of feasible flows in flow allocation games and some preliminary results that will be used in the subsequent sections. For edge-ranking games, the circulation representation can be used to describe a polynomial time algorithm that computes the clearing state $\vecfhat$ in polynomial time.

\subsection{Preliminaries on feasible flows and clearing states.}
\label{sec:clearing}

\paragraph{Circulation structure.}
Given a game $\Gamma$ on $G$ with strategies $\veca$, we build a \emph{circulation network} $G'$ with strategies $\veca'$ and consider an extended game in $G'$ as follows. We add to $G$ an auxiliary node $s$. For every $v \in V$, we add an auxiliary edge $(v,s)$ with capacity $c_{(v,s)} = \infty$. For every $v \in V$ with $b_v > 0$ we add an auxiliary edge $(s,v)$ with $c_{(s,v)} = b_v$, and choose $b_v' = 0$ in the new game. In this way, the supply of $v$ becomes an incoming flow to $v$ on edge $(s,v)$. The strategy $\veca'_s$ of player $s$ is arbitrary. For every strategy profile $\veca$ of the original game, we modify $\veca_v$ such that flow not forwarded by some player $v$ will now be a flow on edge $(v,s)$ under $\veca'_v$.

\begin{proposition}
\label{prop:sFull}
For every feasible flow $\vecf$ for a strategy profile $\veca$, the corresponding flow for the modified strategy profile $\veca'$ in the extended game $G'$ can be decomposed and represented as a circulation. The auxiliary node $s$ has an incoming flow of $\sum_{v \in V} b_v$, and all auxiliary edges $(s,v)$ are saturated.
\end{proposition}

\begin{proof}
The proposition is a simple consequence of fixed point constraint and no-fraud constraint. Non-forwarded flow at firm $v \in V$ in the original game exists only if $v$ saturates all outgoing edges
\[ \sum_{e \in E^+(v)} a_e(f_v) = \min\{f_v, c^+_v\} \enspace.\]
Moreover, the total sum of non-forwarded flow in the original game is exactly the sum of all fixed supply:
\begin{align*}
\sum_{v \in V} b_v \quad &= \quad \sum_{v \in V} b_v + \sum_{v \in V} \sum_{e = (u,v) \in E^-(v)} a_e(f_u) - \sum_{v \in V} \sum_{e=(u,v) \in E^-(v)} a_e(f_u) \\
&= \quad \sum_{v \in V} \left(f_v - \sum_{e \in E^+(v)} a_e(f_v)\right)\quad=\quad \sum_{v \in V} \max\{0, f_v - c^+_v\}\;.
\end{align*}
The non-forwarded flow in the original game gets routed to the auxiliary node $s$ in the extended game. This constitutes the incoming flow of $s$, i.e., $f_s = \sum_{v \in V} \max\{0, f_v - c^+_v\} = \sum_{v \in V} b_v$, and all auxiliary edges $(s,v)$ are saturated. Overall, by routing the non-forwarded flow in the original game to the auxiliary node $s$ in the extended game, we obtain exact flow conservation at every node. As such, the flow can be decomposed and represented as a circulation.
\end{proof}

\paragraph{Fully saturated nodes.} In flow allocation games $\Gamma$, it turns out that the clearing state $\vecfhat$ is fixed as long as all nodes that are not fully saturated stick to their strategies. Since every strategy satisfies capacity and no-fraud constraints, the forwarded flow of fully saturated nodes remains the same if they have the same total supply, and vice versa. Consequently, strategies of fully saturated nodes have no impact on the clearing state $\vecfhat$. For any fully saturated node $v$, every strategy constitutes a best response.

\begin{proposition}
\label{prop:solventUnique}
For a given flow allocation game, consider any profile $\veca$, the corresponding clearing state $\vecfhat$, and any fully saturated node $v$ with $\hat{f}_v \ge c_v^+$. Every strategy $\veca'_v$ is a best response for $v$ against the other strategies $\veca_{-v}$ and results in the same clearing state $\vecfhat$.
\end{proposition}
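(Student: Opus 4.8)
The plan is to show that a solvent firm's choice of strategy is irrelevant to the maximal clearing state $\vecahat$, and that every strategy is simultaneously a best response. The intuition is that a solvent firm always pays off \emph{all} of its debt regardless of how it ranks edges, so the total outflow it sends into the network depends only on the assets it receives, not on its strategy. First I would fix the profile $\vecf$ with maximal clearing state $\vecahat$ and a solvent firm $v$ with $\hat a_v \ge \ell(v)$. The key observation is that by the no-fraud constraint, $\sum_{e \in E^+(v)} f_e(\hat a_v) = \min\{\hat a_v, \ell(v)\} = \ell(v)$, so $v$ saturates every one of its outgoing edges to full capacity $c(e)$ (this uses the capacity constraint together with the fact that the sum of capacities equals $\ell(v)$). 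Crucially, the \emph{same} total payment, edge by edge, is forced for \emph{any} strategy $\vecf'^v$, as long as $v$ still receives at least $\ell(v)$ in assets.

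Next I would argue that replacing $\vecf^v$ by an arbitrary $\vecf'^v$ leaves $\vecahat$ a clearing state of the new profile $(\vecf'^v, \vecf_{-v})$. The only fixed-point equations (3) that could change are those for the creditors $w$ with $(v,w) \in E^+(v)$, but since $v$ pays each such edge to capacity under both strategies whenever it is solvent, the incoming flow $f_{(v,w)}(\hat a_v)$ is unchanged, so every fixed-point constraint still holds at $\vecahat$. Hence $\vecahat \in \mathcal{A}'$, the clearing-state set of the new profile. Since $v$'s utility is its own asset value $\hat a_v$, and this value is preserved, $v$ can achieve $\hat a_v$ under any strategy.

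To finish I must establish that $v$ cannot do strictly better, i.e.\ that $\hat a_v$ is the maximum utility available to $v$ across all strategies. Here I would invoke Theorem~\ref{thm:clearingLattice} and the definition of the utility-defining clearing state as the coordinate-wise maximal one. The assets $v$ can receive are bounded by the fixed-point map; I would argue that the maximal clearing state of the new profile cannot give $v$ assets exceeding $\hat a_v$, because the inflow to $v$ depends only on the strategies $\vecf_{-v}$ and the assets of $v$'s in-neighbors, which in turn are driven by the rest of the network. Since $v$ is already solvent and pays full capacity, lowering or raising $v$'s strategy cannot increase the money circulating back to $v$ beyond what $\vecahat$ already delivers; formally, the two profiles induce the same monotone map on the coordinates other than through $v$'s full-capacity outflow, which is fixed. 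Therefore the maximal clearing states coincide on $v$, giving utility exactly $\hat a_v$ in both, so every $\vecf'^v$ is a best response and the resulting maximal clearing state is $\vecahat$.

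The main obstacle I anticipate is the final uniqueness-of-value step: I must be careful that switching $v$'s strategy does not create a \emph{new, larger} maximal clearing state that happens to route more money back to $v$. The clean way to handle this is to show the two profiles $(\vecf^v,\vecf_{-v})$ and $(\vecf'^v,\vecf_{-v})$ induce \emph{identical} fixed-point maps $g$ on the relevant domain, since the only place the two strategies differ is in how $v$ splits payments among edges that are all saturated to capacity anyway whenever $v$'s assets are at least $\ell(v)$. Once the maps agree (at least on the region where $v$ is solvent), Theorem~\ref{thm:clearingLattice} yields the same maximal fixed point $\vecahat$, and the best-response claim follows immediately.
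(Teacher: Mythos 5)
Your proof is correct, and it reaches the paper's conclusion through a somewhat different formalization of the same core insight (a solvent firm's payments are forced to full capacity by the no-fraud and capacity constraints, so its strategy is irrelevant wherever it is solvent). The paper argues by contradiction via a game transformation: it deletes $E^+(v)$, adds $c(e)$ to the external assets of each creditor, observes that $\vecahat$ remains a clearing state of this auxiliary game, and that any clearing state of the auxiliary game with $a_v \ge \ell(v)$ is a clearing state of the original game under \emph{any} strategy of $v$; transferring feasibility back to the deviated profile contradicts maximality of its clearing state $\veca'$. You instead work directly with the Knaster--Tarski maps $g$ and $g'$ of the two profiles and note that they coincide on the region $\{\veca : a_v \ge \ell(v)\}$; this avoids the auxiliary construction and stays entirely inside the framework of Theorem~\ref{thm:clearingLattice}, which is arguably cleaner. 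One step in your final paragraph should be made explicit, because agreement of two monotone maps on a sub-region does not by itself force their greatest fixed points to coincide: since $\vecahat$ is a fixed point of $g'$ (your second paragraph), the greatest fixed point $\veca'$ of $g'$ satisfies $\veca' \ge \vecahat$, hence $a'_v \ge \hat{a}_v \ge \ell(v)$, hence $\veca'$ lies in the agreement region and is therefore also a fixed point of $g$, whence $\veca' \le \vecahat$ by maximality of $\vecahat$; the two inequalities give $\veca' = \vecahat$. With this short sandwich argument added, your proof is complete --- and it in fact tightens the paper's own slightly loose ``w.l.o.g.\ $\vecahat > \veca'$'' step, which implicitly relies on the same two-sided transfer of feasibility.
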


\begin{proof}
Firm $v$ is fully saturated under $\vecfhat$, thus $\hat{f}_v \geq \sum_{e \in E^+(v)}c_e$ and $\hat{f}_e = c_e$ for all $e \in E^+(v)$. Consider a deviation $\veca'_v$, the resulting profile $\veca' = (\veca'_v,\veca_{-v})$. It suffices to show that $\vecfhat$ is feasible under $\veca'$. Note that capacity, weak flow conservation and no-fraud constraints ensure that $a'_e(\hat{f}_v) = c_e$ for all $e \in E^+(v)$, which immediately implies the feasibility of $\vecfhat$ under $\veca'$.
\end{proof}

\paragraph{No-Fraud Property.} 

We observe that violating the no-fraud constraint is never in the interest of any player in any game.

\begin{proposition}
	\label{prop:no-fraudNice}
	Suppose $\veca$ is a strategy profile of monotone strategies that do not necessarily fulfill the no-fraud constraint~\eqref{eq:nofraud}, and strategy $\veca_v$ of player $v$ is such that for a value $y$ of total supply we have $\sum_{e \in E^+(v)} a_e(y) < \min\{c^+_v,y\}$. There is a no-fraud strategy $\veca'_v$ such that $u_w(\veca) \le u_w(\veca'_v,\veca_{-v})$ for every player $w \in V$.
\end{proposition}
\begin{proof}
Consider any no-fraud strategy $\veca_v'$ arising from $\veca_v$ by increasing the functions $a_e$ arbitrarily such that Condition~\eqref{eq:nofraud} holds. Consider the clearing states $\vecfhat$ for $\veca$ and $\vecfhat'$ for $(\veca_v', \veca_{-v})$. For the same total supply $y$ of $v$ we know $a_e(y) \le a'_e(y)$ for every $e \in E^+(v)$. Now consider $\vecfhat$. If $\vecfhat$ is a feasible flow for $(\veca_v', \veca_{-v})$, then $\vecfhat' \ge \vecfhat$, so the utility of every player in $(\veca_v', \veca_{-v})$ is weakly improved and we are done. 

Otherwise, $\vecfhat$ is not a feasible flow for $(\veca_v', \veca_{-v})$. We apply the straightforward monotone fixed-point iteration based on the map $g$ in Equation~\eqref{eq:fixedPoint}. Since the space $F$ of possible flow vectors is compact, the iteration converges to a feasible flow for $(\veca_v', \veca_{-v})$ that is coordinate-wise at least $\vecfhat$. Hence $\vecfhat' \ge \vecfhat$, so the utility of every player in $(\veca_v', \veca_{-v})$ is weakly improved.
\end{proof}

\subsection{Structure and computation of feasible flows in games with edge-ranking strategies.}
\label{sec:topCycle}

For edge-ranking games, we provide a more detailed analysis of the structure of feasible flows in a strategy profile $\vecpi$. We observed above that every feasible flow $\vecf \in \mathcal{F}$ is in a one-to-one correspondence to a circulation in the circulation network. Here, we observe that the circulation follows a partial order. We use this structural insight to show that the clearing state $\vecfhat$ can be computed in polynomial time.

Before we describe the algorithm, we will define the notion of an \emph{active edge}. For a given strategy profile $\vecpi$, consider the circulation network $G'=(V',E')$ with auxiliary node $s$ and the corresponding auxiliary edges described in the previous section. We also extend $\vecpi$ to $\vecpi'$ and define the auxiliary edge $(v,s)$ to be the least preferred edge by node $v$. For completeness, we treat $s$ as a node with a fixed strategy $\pi'_s$ over its outgoing auxiliary edges. Due to Proposition~\ref{prop:sFull}, $s$ exactly saturates all outgoing edges in every feasible flow. As such, the strategy $\pi'_s$ has no impact on the set of feasible flows. 

Let $\vecf$ be a feasible flow in a circulation network $G'$. In games with ranking-based strategies, we can observe the following. Independent of the total supply $f_v$ of some node $v$, there is a uniquely defined outgoing edge for the next unit of $v$'s supply. We call this edge the \emph{active edge} of $v$. If $\vecf \equiv \mathbf{0}$, the active edge of a node $v$ is $\pi'_v(1)$. If $\sum_{i=1}^{k-1}{c_{\pi'_v(i)}} \leq f_v < \sum_{i=1}^{k}{c_{\pi'_v(i)}}$ for some $k$, the active edge is $\pi'_v(k)$. 

In the following, we will describe the \TCI\ algorithm. For a formal description see Algorithm \ref{alg:tci}. In the description we distinguish necessary and optional cycles. This classification builds on the lattice structure of feasible flows (c.f.\ Theorem~\ref{thm:clearingLattice}). Since the algorithm computes the supremum of the lattice, it raises flow along all -- necessary and optional -- cycles.

\begin{algorithm}[t]
\DontPrintSemicolon

\SetKwInOut{Input}{Input}\SetKwInOut{Output}{Output}
\Input{The circulation network $G'$, strategy profile $\vecpi'$, auxiliary node $s$}
\Output{The circulation $\vecf$ representing the clearing state for $\vecpi$ in $G$}
$f_e \leftarrow 0$ for all $e \in E'\;.$\;
\While{there is a cycle $C$ of active edges under $\vecf$}{
 Choose an arbitrary cycle $C\;.$\;
 $\delta \leftarrow \min\limits_{e \in C}\{c_e - f_e\}$\;
 $f_e \leftarrow f_e + \delta$ \quad for all $e \in C\;.$\;
}
\Return{$\vecf$}
\caption{\TCI}\label{alg:tci}
\end{algorithm}

\paragraph{Necessary cycles.} 
Consider the set of all active edges for $\vecf \equiv \mathbf{0}$. Every node $v \neq s$ has exactly one outgoing active edge $\pi'_v(1)$. The set of active edges form disjoint cycles with attached trees. Each tree attached to a cycle is rooted in a node from the cycle, and directed towards the cycle $C$. We define the \emph{orbit} of $C$ by 
\[ o(C) = \{ v \in V \mid \exists \text{ $v$-$u$-path of active edges, for some } u\in C\},\]  
i.e., the set of nodes $v$ from which we can reach $C$ over active edges. Now, consider the auxiliary node $s$. Note that $\vecf \equiv 0$ is a feasible circulation in $G'$, but $\sum_v f_{(v,s)} = c^+_s = \sum_v b_v$ is a necessary constraint to ensure that $\vecf$ represents a feasible flow in the original graph $G$. However, as long as $\sum_v f_{(v,s)} < \sum_v b_v$ there is an active outgoing edge of $s$ and all nodes (including $s$) belong to some orbit. Hence, there is a cycle $C$ with $s \in o(C)$. Flow conservation and the monotonicity of strategies implies that some amount of flow of $s$ must eventually reach $C$. Due to flow conservation in $C$, a flow of at least $\delta_C = \min\{c_e \mid e \in C\}$ must thus be present on every edge of $C$. This is a necessary condition in every feasible flow $\vecf \in \mathcal{F}$.

It is straightforward to inductively apply this argument, thereby obtaining a sequence of necessary cycles $C$ that must be filled with flow $\delta_C = \min\{c_e - f_e \mid e \in C\}$. In particular, when a flow of $f_e = \delta_C$ has been assigned to every edge $e \in C$, the active edge of at least one of the vertices in $C$ changes. This implies that the orbits change, i.e., the orbit $o(C)$ partitions into new suborbits, or parts that get attached to other orbits. Note that once a vertex $v$ becomes fully saturated and all regular outgoing edges are filled, the active edge becomes $(v,s)$. 

Orbits present at the same time are always mutually disjoint. Thus, for two existing orbits $o(C_1)$ and $o(C_2)$, pushing flow along $C_1$ can never change the active edges of $o(C_2)$. Hence, it is necessary that all the cycles $C$, where $s$ eventually appears in the orbit, must get assigned a flow increase $\delta_C$ in order to reach $\sum_v f_{(v,s)} = \sum_v b_v$. Once we reach a flow $\vecf$ where $\sum_v f_{(v,s)} = \sum_v b_v$, an ``orbit'' $o(s)$ emerges composed of a tree rooted in $s$. At this point, we have indeed constructed a feasible flow, which by induction is the unique minimal circulation in $G'$ that represents a feasible flow in $G$.

\paragraph{Optional cycles.}
In the following, we characterize the structure of all other feasible flows $\vecf \in \mathcal{F}$ by applying similar observations. Fix some flow $\vecf$ in the circulation network that is also feasible in $G$, i.e., $\sum_v f_{(s,v)} = \sum_v b_v$ and consider the set of all active edges. 

Suppose there is a cycle $C$ with some orbit $o(C)$, i.e., a set of nodes $v$ that are not attached to the tree rooted in $s$. In this case, one can push a non-zero amount of flow along $C$, i.e., increase $f_v$ by a strictly positive amount for every $v \in C$. This obviously yields a new feasible clearing state. When an edge becomes saturated, the set of active edges changes and the orbit $o(C)$ disappears, i.e., gets split up as explained above (new suborbits, parts attached to other orbits, parts attached to the tree rooted in $s$).

Note that there might exist multiple orbits at the same time and, thus, multiple possibilities to extend $\vecf$ by increasing flow along a cycle. However, as observed above, orbits present at the same time are always mutually disjoint, and pushing flow along cycle $C_1$ can never change the active edges in an orbit $o(C_2)$ present at that time. Now consider some vertex $v \in o(C)$. In order to create some feasible flow $\vecf'$ with $f'_v > f_v$ it is necessary to push flow along $C$ until $f_v$ is reached (if $v \in o(C) \cap C$) or $o(C)$ disappears and splits up (if $v \in o(C) \setminus C$). Since the flow adjustments monotonically increase all flow values, there is a one-to-one correspondence between sets of cycles with flow increase and the feasible flows. 

Note that the cycles chosen for flow increase form a partial order: A cycle $C'$ might not be present in the beginning -- there might be predecessor-cycles $C$ that have to be filled up to $\delta_C$ to break an existing orbit $o(C)$, change some of the active edges, and make $C'$ appear. In the argumentation above, it can be seen that the set of predecessor cycles $pred(C')$ for some cycle $C'$ is uniquely defined resulting from the ranking of edges in $\vecpi$.

The arguments above imply a natural algorithm to compute $\vecfhat$, which is similar in spirit to the classic Top-Trading-Cycles algorithm for house allocation~\citep{ShapleyS74}. The \TCI\ algorithm iteratively raises flow along cycles among the active edges in the circulation network $G'$. Thereby it computes the unique maximal feasible flow $\vecfhat$ from the lattice. The algorithm runs in polynomial time -- in every round it increases the flow along a cycle $C$ by $\delta_C$. At this point at least one edge (from $G$ or auxiliary) becomes saturated. In terms of the original network $G=(V,E)$, the algorithm needs at most $O(|V| + |E|)$ rounds. Each round can easily be implemented in strongly polynomial time.

\begin{proposition}
\label{prop:clearingAlgo}
In every edge-ranking game, the \TCI\ algorithm computes the clearing state $\vecfhat$ in strongly polynomial time.
\end{proposition}

Since unit-ranking games can be cast as edge-ranking games with unit-capacity multi-edges, the algorithm can be applied in unit-ranking games as well. Note that the running time does not necessarily remain polynomial due to the pseudo-polynomial blowup in representation size.

\section{Unit-ranking games.}
\label{sec:coin}

\subsection{Existence and computation of equilibria.}

In this section we consider equilibria in unit-ranking games. Our first result is that in every unit-ranking game there is a strong equilibrium that can be computed in polynomial time. Moreover, for a large class of quality functions $\sw$, we can guarantee the existence of a strong equilibrium that is optimal with respect to $\sw$. In particular, every such equilibrium profile of unit-ranking strategies can be represented compactly. In the next theorem, we prove the existence and representation result, and we show that the price of stability is 1 for $\rev$. The extension to more general quality functions $\sw$ is discussed subsequently. 

Consider an arbitrary flow allocation game and a circulation that maximizes the total flow in the circulation network $G'$. We show that this circulation can be expressed as a clearing state of a strong equilibrium in unit-ranking strategies.

\begin{theorem}
\label{thm:coinSPoS1}
For every unit-ranking game, there is a strong equilibrium that maximizes the total amount of flow. The strong equilibrium can be computed in polynomial time, and the equilibrium profile can be represented in polynomial space.
\end{theorem}

\begin{proof}
Consider the circulation network $G'=(V,E')$. For a moment, assume this is a standard flow network without strategic flow allocation. Consider an optimal circulation $\vecf^*$ that maximizes the total flow value, i.e., it maximizes the sum of flow on all edges. This implies, in particular, that it saturates all outgoing auxiliary edges from $s$. Clearly, $\vecf^*$ yields an upper bound on the achievable total amount of flow (denoted by $\rev(\vecf^*)$) in any strategy profile $\veca$ of the game
\begin{align*}
\sum_{e \in E'} f_e^* &= 2\sum_{v \in V} b_v + \sum_{v \in V} \sum_{e \in E^+(v)} f^*_e\\
 &=  2\sum_{v \in V} b_v + \rev(\vecf^*) \geq 2\sum_{v \in V} b_v + \rev(\veca).
\end{align*}
$\vecf^*$ can be computed in strongly polynomial time~\citep{Tardos85}. Since all edge capacities are integral, we can assume all $f_e^*$ are integral. 

We now turn this circulation into a clearing state for a carefully chosen strategy profile $\veca^*$ of unit-ranking strategies. We will choose $\veca^*$ such that it can be compactly represented by \emph{threshold-ranking} strategies.

In a threshold-ranking strategy, every firm $v$ chooses a permutation $\pi_v$ over $E^+(v)$ and thresholds $\tau_e$. The interpretation of threshold-ranking strategies is that node $v$ first assigns $\tau_e$ particles to every edge $e \in E^+(v)$, sequentially in the order given by $\pi_v$. Then, it assigns the remaining $c_e - \tau_e$ particles to every edge in the order given by $\pi_v$. That is, $v$ first considers edge $\pi_v(1)$ and forwards the first $\tau_{\pi_v(1)}$ particles to this edge. The next $\tau_{\pi_v(2)}$ particles are forwarded to edge $\pi_v(2)$ etc.\ until $\sum_{j=1}^{|E^+(v)|} \tau_{\pi_v(j)}$ particles are sent to the edges (or $v$ runs out of flow). Then, the remaining $c_{\pi_v(1)} - \tau_{\pi_v(1)}$ particles are forwarded to edge $\pi_v(1)$, then the next $c_{\pi_v(2)} - \tau_{\pi_v(2)}$ particles to $\pi_v(2)$ etc. Clearly, threshold-ranking strategies are more general than edge-ranking strategies. They constitute a special class of unit-ranking strategies with compact representation.

We choose $\veca^*$ as follows. Every firm $v$ chooses an arbitrary permutation $\pi_v$ over $E^+(v)$ and sets $\tau_e = f^*_e$. It is easy to see that in $\veca^*$, the optimal circulation $\vecf^*$ corresponds to the clearing state. Let us prove that $\veca^*$ is a strong equilibrium, i.e., that no coalition $C \subseteq V$ has a profitable deviation.

Suppose for contradiction that there is a coalition $C$ with a profitable deviation. Examine the new profile $(\veca'_C, \veca^*_{-C})$ and assume $\vecfhat'$ is the clearing state. Consider a node $v \in C$. Since $u_v(\veca'_{C}, \veca^*_{-C}) > u_v(\veca^*)$, there must be strictly more outgoing flow from $v$ in the new profile. Due to the no-fraud condition, this can only happen if $v$ also has strictly more \emph{incoming flow} in the new profile. Hence, there is an incoming edge $e=(w,v) \in E^{-}(v)$ with $\hat{f}'_e > f^*_e$. Now consider node $w$. If $w \in C$, then $u_w(\veca'_{C}, \veca^*_{-C}) > u_w(\veca^*)$, so by the same reasoning there is again some incoming edge in $E^-(w)$ that has strictly more flow in $\vecfhat'$. Otherwise, if $w \not\in C$, then $w$ still plays the strategy $\veca_w^*$. Due to monotonicity, a higher flow on $(w,v)$ can only occur if $w$ has larger total supply. Thus, there is again some incoming edge in $E^-(w)$ that has strictly more flow in $\vecfhat'$.

We can repeat this argument indefinitely. As such, there must be a cycle of edges that all have more flow in $\vecfhat'$ than in $\vecf^*$. Such a cycle can be used to increase the circulation, which contradicts that $\vecf^*$ is an optimal circulation in $G'$.
\end{proof}

\begin{remark} \rm
For the profitable deviation, we can even allow arbitrary continuous strategies and any choice of clearing state for the deviation profile. This applies even for games with non-monotone and fraud strategies, as long as a feasible flow exists and the clearing state is chosen arbitrarily among the feasible flows that are not weakly dominated in terms of coordinate-wise comparison.
\end{remark}

\begin{remark} \rm
If we consider deviations that weakly improve the coalition (i.e., $u_v(\veca'_C,\veca_{-C}) \ge u_v(\veca)$ for all $v \in C$ and $u_w(\veca'_C,\veca_{-C}) > u_w(\veca)$ for at least one $w \in C$), it is a simple exercise to see that there are unit-ranking games, in which no such (often termed ``super-strong'') equilibrium exists.
\end{remark}

As mentioned before, the result can be generalized quite substantially beyond $\rev$ to a large class containing various quality functions $\sw(\veca)$. Let $\mathcal{A}$ be the set of all possible strategy profiles of a unit-ranking game. Let $\veca, \veca' \in \mathcal{A}$ be two strategy profiles and $\vecfhat$ and $\vecfhat'$ the clearing states in $\veca$ and $\veca'$, respectively. Recall the coordinate-wise comparison of flows used in Theorem~\ref{thm:clearingLattice}. The function $\sw : \mathcal{A} \to \RR$ is \emph{flow-monotone} if $\vecfhat > \vecfhat'$ implies $\sw(\veca) \ge \sw(\veca')$. 

\begin{corollary}
	\label{cor:coinSPoS1}
	For every unit-ranking game and every flow-monotone social welfare function $\sw$, there is a strong equilibrium that maximizes $\sw$. The equilibrium profile can be represented in polynomial space.
\end{corollary}

\begin{proof}
  The proof uses the same argument as above. Consider the optimal value $\sw_{\max} = \max_{\veca \in \mathcal{A}} \sw(\veca)$ and the set $\mathcal{A}_{\max} = \{ \veca \mid \sw(\veca) = \sw_{\max}\}$ of optimal strategy profiles w.r.t.\ $\sw$. Consider a profile $\veca^* \in \mathcal{A}_{\max}$ such that the clearing state $\vecfhat^*$ is not coordinate-wise dominated by a clearing state of another optimal profile, i.e., there is no $\veca' \in \mathcal{A}_{\max}$ with $\vecfhat' > \vecfhat^*$. We show that $\veca^*$ is a strong equilibrium.

  For contradiction, suppose there is a coalitional deviation in $\veca^*$. By the same arguments as in the proof above, the deviation implies that there is a cycle that allows to increase the flow. Thus, there is a circulation $\vecf > \vecfhat^*$ in the flow network $G'$. By constructing a strategy profile $\veca$ using threshold-ranking strategies, $\vecf$ can be turned into the clearing state of $\veca$. Since $\sw$ is flow-monotone, $\sw_{\max} = \sw(\veca^*) \le \sw(\veca)$. Hence, $\veca \in \mathcal{A}_{\max}$ and $\vecf > \vecfhat^*$, a contradiction to the choice of $\veca^*$. 
  
  Finally, by using threshold-ranking strategies the equilibrium profile has a compact representation.
\end{proof}

The corollary implies that the price of stability for strong equilibria is 1 for a very wide range of natural quality functions. For example, instead of $\rev(\veca)$ (representing utilitarian welfare~\eqref{eq:flowSW}) we might prefer to express the quality of a strategy profile by
\begin{compactitem}
\item \emph{egalitarian welfare} $\textsc{EW}(\veca) = \min_{v \in V} u_v(\veca)$, i.e., the minimum utility of any player in the network, or
\item \emph{Nash social welfare} $\textsc{NSW}(\veca) = \left(\prod_{v \in V} u_v(\veca)\right)^{1/n}$, i.e., the geometric mean of player utilities, or
\item the \emph{number of fully saturated nodes} $\textsc{FSN}(\veca) = |\{ v \in V \mid \hat{f}_v = c_v^+ \}|$, which in financial networks corresponds to the number of solvent firms, or
\item any monotone transformation or combination of the above functions.
\end{compactitem}

Corollary~\ref{cor:coinSPoS1} implies the existence of an optimal strong equilibrium for all these functions. 

While we can compute any strong equilibrium in polynomial time, for some functions $\sw$ an \emph{optimal} strong equilibrium can be \classNP-hard to compute. A simple reduction shows that for each of the objective functions in the list above it is \classNP-hard to compute an optimal strategy profile and, consequently, also an optimal strong equilibrium. 

\begin{theorem}
	\label{thm:coinPoSNPC}
	In unit-ranking games, it is strongly \classNP-hard to compute a strategy profile that maximizes $\textsc{EW}$, $\textsc{NSW}$, or $\textsc{FSN}$.
\end{theorem}

\begin{proof}
We start by considering the problem of optimizing \textsc{FSN}. This task can be at least as hard as deciding \textsc{Exact Cover by 3-Sets (X3C)}. In an instance of X3C, we have a set $R$ of $3k$ elements for an integer $k \ge 1$ and a set $\mathcal{S} \subseteq 2^{E}$ of $m$ triplets (i.e., $|S| = 3$ for each $S \in \mathcal{S}$). The goal is to decide if there are $k$ non-overlapping sets in $\mathcal{S}$. 

For the reduction, we construct a game containing a source node $v$ with $b_v = 3k$. Each $S \in \mathcal{S}$ is a node, and for each $r \in R$ there are three nodes $r^1,r^2,r^3$. We have edges $(v,S)$ of capacity $3k+1$, for every $S \in \mathcal{S}$, edges $(S,r^1)$ of capacity 1, for every $r \in S$, and two edges $(r^1,r^2), (r^2,r^3)$ of capacity 1 for every $r \in R$. Note that all $r^3$ are fully saturated, and $v$ can never be fully saturated. By routing the flow from $v$ via the sets to the elements, we can fully saturate at least the $9k$ nodes corresponding to elements in $R$. If nodes of overlapping sets are fully saturated, we route a flow of $x \ge 2$ to some element node $r^1$, which implies that $2(x-1)$ nodes for $x-1$ other elements $r' \in R$ are not fully saturated. This does not happen in an optimal profile if (and only if) the X3C instance is a yes-instance. More formally, an optimal profile saturates at least $9k+k$ nodes for elements and sets if and only if there are $k$ non-overlapping sets in $\mathcal{S}$.

To prove the result for objectives \textsc{EW} and \textsc{NSW}, we introduce two auxiliary nodes $S_a$ and $r_a$. For each $S \in \mathcal{S}$, we add edges $(S,S_a)$ and $(S_a,S)$ with capacity 1. W.l.o.g.\ we can assume that all nodes $S$ give first priority to the edge $(S,S_a)$. Consequently, each $S$ has a positive utility of at least 1. $S_a$ has utility $|\mathcal{S}|$. Similarly, for each node $r^3$, we add edges $(r^3,r_a)$ and $(r_a,r^3)$ with capacity 1. As such, all nodes $r^3$ can be assumed to have a utility of at least 1. $r_a$ has utility $3k$. 

Now the problem of optimizing \textsc{EW} or \textsc{NSW} reduces to deciding whether all nodes $r^1, r^2$ can simultaneously obtain positive utility, since otherwise the objective value will be 0. This property, however, is equivalent to choosing $k$ non-overlapping sets to route the total demand of $3k$ from $v$ to the $3k$ nodes $r^1$. As such, a positive value for \textsc{EW} or \textsc{NSW} can be obtained if and only if the X3C instance is a yes-instance.
\end{proof}

Irrespective of the quality function, even computing a \emph{best-response strategy} for a single node $v$ in a strategy profile can be strongly \classNP-hard, since best responses can provide answers to computationally hard decision problems. This holds even in games without fixed supply and with edge capacities in $\{0,1\}$.

\begin{theorem}
	For a given strategy profile $\veca$ of a unit-ranking game with $b_v = 0$ for all $v \in V$ and $c_e \in \{0,1\}$ for all $e \in E$, deciding whether a given node $v$ has a best response resulting in utility at least $k$ is strongly \classNP-complete.
	\label{thm:bestRespNPC}
\end{theorem}
\begin{figure}[ht!]
	\begin{center}
		\begin{subfigure}[t]{0.4\linewidth}
			\begin{tikzpicture}[auto,swap,scale=1.1]
				\node[vertex](s) at (0,0){$v$};
				\node[vertex](xi10) at (-1.879,-.684){$x_{i,1,0}$};
				
				\node[smallvertex](zi0) at (-.9,-.7){$z_{i,0}$};
				\node[smallvertex](zi1) at (-.9,.7){$z_{i,1}$};
				
				\node[vertex](xi11) at (-1.879,.684){$x_{i,1,1}$};
				\node[vertex](xi20) at (-1,-1.732){$x_{i,2,0}$};
				\node[vertex](xi21) at (-1,1.732){$x_{i,2,1}$};
				\node[vertex](xi30) at (0.347,-1.97){$x_{i,3,0}$};
				\node[vertex](xi31) at (0.347,1.97){$x_{i,3,1}$};
				\node[vertex](xi40) at (1.532,-1.285){$x_{i,4,0}$};
				\node[vertex](xi41) at (1.532,1.285){$x_{i,4,1}$};
				\node[vertex](xi) at (2,0){$z_{i}$};
				\path[edge] (s) edge [bend right = 0] (xi10);
				\path[edge] (s) edge (xi20);
				\path[edge] (s) edge (xi21);
				\path[edge] (s) edge (xi30);
				\path[edge] (s) edge (xi31);
				\path[edge] (s) edge (xi40);
				\path[edge] (s) edge (xi41);
				%
				
				\path[edge] (s) edge (zi0);
				\path[edge] (zi0) edge (xi10);
				
				\path[edge] (xi10) edge (xi20);
				\path[edge] (xi20) edge (xi30);
				\path[edge] (xi30) edge (xi40);
				\path[edge] (xi40) edge (xi);
				\path[edge] (s) edge [bend right = 0] (xi11);
				
				\path[edge] (s) edge (zi1);
				\path[edge] (zi1) edge (xi11);
				
				\path[edge] (xi11) edge (xi21);
				\path[edge] (xi21) edge (xi31);
				\path[edge] (xi31) edge (xi41);
				\path[edge] (xi41) edge (xi);
				\path[edge] (xi) edge (s);
			\end{tikzpicture}
			\caption{Variable Gadget}
			\label{fig:varGadget}
		\end{subfigure}%
		\begin{subfigure}[t]{0.5\linewidth}
			\begin{center}
				\begin{tikzpicture}[auto,swap,scale=0.85]
					\node[vertex](s) at (0,0){$v$};
					\node[vertex](c1) at (.75,1.5){$c_1$};
					\node[vertex](c2) at (1.5,-.75){$c_2$};
					\node[vertex](c3) at (-.75,-1.5){$c_3$};
					\node[vertex](c4) at (-1.5,.75){$c_4$};
					\node[vertex](x111) at (1,3.5){$x_{1,1,1}$};
					\node[vertex](x211) at (2,3){$x_{2,1,1}$};
					\node[vertex](x310) at (2.5,2){$x_{3,1,0}$};
					\node[vertex](x121) at (3,0){$x_{1,2,1}$};
					\node[vertex](x220) at (3.25,-1.5){$x_{2,2,0}$};
					\node[vertex](x421) at (2,-2.5){$x_{4,2,1}$};
					\node[vertex](x331) at (-1,-3.5){$x_{3,3,1}$};
					\node[vertex](x430) at (-2,-3){$x_{4,3,0}$};
					\node[vertex](x241) at (-2.5,0){$x_{2,4,1}$};
					\node[vertex](x340) at (-3,1){$x_{3,4,0}$};
					\node[vertex](x441) at (-2.5,2){$x_{4,4,1}$};
					\node[vertex](x541) at (-1.5,2.5){$x_{5,4,1}$};
					\path[edge] (c1) edge (s);
					\path[edge] (c2) edge (s);
					\path[edge] (c3) edge (s);
					\path[edge] (c4) edge (s);
					\path[edge] (x111) edge (c1);
					\path[edge] (x211) edge (c1);
					\path[edge] (x310) edge (c1);
					\path[edge] (x121) edge (c2);
					\path[edge] (x220) edge (c2);
					\path[edge] (x421) edge (c2);
					\path[edge] (x331) edge (c3);
					\path[edge] (x430) edge (c3);
					\path[edge] (x241) edge (c4);
					\path[edge] (x340) edge (c4);
					\path[edge] (x441) edge (c4);
					\path[edge] (x541) edge (c4);
				\end{tikzpicture}
			\end{center}
			\caption{Example Construction of a Network}
			\label{fig:restGraphNPHardness}
		\end{subfigure}
	\end{center}
	\caption{Structures used in the proof of Theorem~\ref{thm:bestRespNPC}.}
\end{figure}%

\begin{proof}
For unit-ranking games with edge capacities in $\{0,1\}$ and fixed supplies equal to $0$, the decision problem is obviously contained in \classNP: We can represent every unit-ranking strategy as a ranking over edges. Then, our \TCI\ algorithm to compute $\vecfhat$ discussed in Section~\ref{sec:topCycle} runs in polynomial time and we can efficiently verify the utilities for all players.

For strong \classNP-hardness, suppose we are given an instance $I$ of \SAT\ in conjunctive normal form with $n$ variables and $m$ clauses. We construct a unit-ranking game in edge-ranking representation with a node $v$ and a strategy profile $\vecpi_{-v}$ for the other players such that the following holds: There is a strategy $\pi_v$ with utility $u_v(\pi_v, \vecpi_{-v}) \ge k'+n$ if and only if $I$ has a variable assignment that fulfills at least $k'$ clauses.

We construct the game as follows. We denote the variables of $I$ by $x_1,\ldots,x_n$ and the clauses by $C_1,\ldots,C_m$. For each variable $x_i$ we create nodes $x_{i,j,0}$ and $x_{i,j,1}$ for all $j \in \{1,\dots,m\}$, as well as a node $z_i$. For each clause $C_j$ we add a clause node $c_j$. In addition, there is a separate node $v$, for which we strive to find a best response.

For each clause $C_j$, we add a unit-capacity edge from $x_{i,j,0}$ to $c_j$ if $x_i$ appears as $\neg x_i$ in $C_j$ and from $x_{i,j,1}$ to $c_j$ if it appears as $x_i$ in $C_j$. Flow incoming to $c_j$ will eventually indicate a literal that fulfills the clause $C_j$. There is an edge $(c_j,v)$ for all $j \in \{1,\dots, m\}$. We will show below that this edge ensures that satisfying clause $C_j$ adds exactly one unit to the total supply of $v$.

For each variable $x_i$, we add a variable gadget. It consists of nodes $v, x_{i,j,0}$ and $x_{i,j,1}$ for all $j\in \{1,\dots,m\}$, as well as auxiliary nodes $z_i$, $z_{i,0}$ and $z_{i,1}$. There are unit-capacity edges $(v,x_{i,j,0})$ and $(v,x_{i,j,1})$ for all $j \in \{1,\dots,m\}$, edges $(x_{i,j,0},x_{i,j+1,0})$, $(x_{i,j,1},x_{i,j+1,1})$ for $j \in \{1, \dots, m-1\}$, and edges $(v,z_{i,0})$, $(z_{i,0}, x_{i,1,0})$ and $(v,z_{i,1})$, $(z_{i,1},x_{i,1,1})$. Firm $z_i$ has edges $(x_{i,m,0},z_i)$, $(x_{i,m,1},z_i)$, and $(z_i,v)$. An example for the gadget that is constructed for a variable $x_i$ and $m=4$ is depicted in Fig.~\ref{fig:varGadget}. Note that for every strategy of $v$, there is at most one cycle emerging in this gadget, since all cycles must include the outgoing edge of $z_i$. 

In Fig.~\ref{fig:restGraphNPHardness}, we show an example of the network without the variable gadgets for the \SAT\ instance $I = (x_1 \vee x_2 \vee \neg x_3) \wedge (x_1 \vee \neg x_2 \vee x_4) \wedge (x_3 \neg x_4) \wedge (x_2 \vee \neg x_3 \vee x_4 \vee x_5)$.

We construct a strategy profile $\vecpi_{-v}$ as follows. Observe that nodes $c_1,\ldots,c_m$ and $x_1,\ldots,x_n$ each have a single outgoing edge, their strategies are trivial. If a node $x_{i,j,0}$ or $x_{i,j,1}$ has multiple outgoing edges, it always prioritizes the edge to nodes $x_{i,j+1,0}$ and $x_{i,j+1,1}$, respectively, or to $z_i$ if $j=m$. 

In the following, we argue that there is a best response of $v$ with utility $k'+n$ if and only if there is a variable assignment such that $k'$ clauses are fulfilled in $I$. Suppose there is a variable assignment such that $k'$ clauses are fulfilled. Fix this assignment, and for every satisfied clause $c$ choose a single literal $l_c$ that evaluates to true in the clause. Let $v$ choose the following strategy: First, prioritize edges $(v,x_{i,1,0})$ if $x_a = \text{false}$ in the assignment and $(v,x_{i,1,1})$ if $x_a = \text{true}$ in the assignment. All these edges will close a cycle $(v, x_{i,1,0}, x_{i,2,0}, \dots, x_{i,m,0}, x_i, v)$ or $(v, x_{i,1,1}, x_{i,2,1}, \dots, x_{i,m,1}, x_i, v)$. After that, for all clause-fulfilling literals $l_c$ prioritize the edges $(v,x_{i,j,0})$ if $c=C_j$ and $l_c = \neg x_i$ in any order. Prioritize the edges $(v,x_{i,j,1})$ if $c=C_j$ and $l_c = x_i$. All these edges close a cycle via the clause node $c_j$, leading to a total inflow of $k'+n$.

For showing the other direction, we observe the following structural property for all variables $x_i$. If there is some flow on an edge $(x_{i,m,0},z_i)$ there cannot be any flow on edge $(x_{i,m,1},z_i)$, and vice versa. We conclude that flow on some edge $(x_{i,j,0},c_j)$ implies flow on edge $(x_{i,j,0},x_{i,j+1,0})$ (since it has a higher priority), and $(x_{i,m,0},z_i)$, and thus no flow on $(x_{i,j',1},c_{j'})$ for all $j'$. Analogously, we observe that any flow on some edge $(x_{i,j,0},c_j)$ implies no flow on edges $(x_{i,j',1},c_{j'})$.

We observe that if there is a best response of node $v$ with inflow equal to $k'+n$, the node has $k'+n$ incoming edges that carry flow. At most $n$ of these edges can be $(z_i,v)$-edges, so there are at least $k'$ clause-edges $(c_j,v)$ that carry flow. Thus, all these $k'$ clause nodes receive incoming flow. If this flow for some clause $c_j$ comes from a node $x_{i,j,0}$, we know by the observation above that no edge $x_{i,j',1}$ carries flow. Thus we can set the variable $x_i$ to false, which fulfills clause $C_j$. Applying this and analogous operations for flow on an edge $(x_{i,j,1},v)$ yields a variable assignment which fulfills at least $k'$ clauses.
\end{proof}

\subsection{Total flow in equilibrium.}

In this section, we analyze the quality of pure Nash and strong equilibria in unit-ranking games, mostly in terms of the \rev\ objective. In the last section we observed that the prices of stability for Nash and strong equilibria in unit-ranking games are both 1. We here bound the prices of anarchy for Nash and strong equilibria. The total flow depends crucially on the emergence of cycles in the strategy profile. This requires an effort that is inherently coalitional. As such, it might be unsurprising that there are games in which the worst Nash equilibrium may fail to provide any reasonable fraction of the optimal total flow.

\begin{proposition}
\label{prop:poaUnbounded}
The price of anarchy for Nash equilibria in terms of \rev\ is unbounded, even in unit-ranking games without fixed supplies.
\end{proposition}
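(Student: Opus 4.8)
The plan is to construct an explicit family of coin-ranking games in which some Nash equilibrium has total revenue bounded by a constant (or even arbitrarily small relative to the optimum), while the social optimum routes a large circulation. Since we already have a system-optimal strong equilibrium and hence a finite optimum, it suffices to exhibit, for arbitrarily large target ratios, a single instance together with a strategy profile that is a pure Nash equilibrium but whose maximum-revenue clearing state is poor. The contrast must come from the fact that Nash stability only forbids \emph{unilateral} deviations: a beneficial rerouting of money along a cycle may require several firms to change their rankings simultaneously, so no single firm can unlock the improvement alone.

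First I would pick a small network—ideally with a constant number of firms, as the statement promises—whose optimal money circulation uses a cycle through several nodes to generate large internal assets, while a ``bad'' strategy profile has every firm directing its available coins onto a dead-end edge (an edge leaving the cycle, toward a node that returns nothing). In that bad profile, because there are no external assets, if the coins never enter the productive cycle then the internal assets stay near zero and the revenue is tiny; the optimum, by contrast, circulates a large amount around the cycle and collects it as internal assets at every node. The key design requirement is that each firm, acting alone, cannot profit by switching its ranking to feed the cycle: a firm redirecting its coins toward the cycle only helps if the \emph{next} firm also cooperates, since otherwise the redirected money simply leaks out at the next dead end and never comes back. This is exactly the gap between Nash and strong stability, and it is why Proposition~\ref{prop:solventUnique} and the cycle-based improvement argument from Theorem~\ref{thm:coinSPoS1} do not rescue Nash equilibria.

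The main steps, in order, are: (i) specify the graph, edge weights in $\{0,1\}$ (using unit multi-edges where larger capacities are needed, staying consistent with the ``no external assets'' and ``$\{0,1\}$ weights'' conditions or at least integer weights), and the intended bad strategy profile $\vecpi$; (ii) compute its maximum-revenue clearing state $\vecahat$ via the circulation interpretation of Proposition~\ref{prop:sFull}—with no external assets, a circulation that never enters the cycle carries zero flow, so $\rev(\vecpi,\vecahat)=0$ or some small constant; (iii) verify that $\vecpi$ is a pure Nash equilibrium by checking that no single firm has a profitable unilateral deviation, arguing that any one firm's rerouting fails to close a revenue-bearing cycle because its neighbors keep leaking the money out; and (iv) exhibit the social optimum as a circulation saturating the productive cycle, giving revenue growing without bound (or bounded below by a fixed positive constant while the equilibrium gives $0$), so that the ratio is unbounded.

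I expect the main obstacle to be step (iii): rigorously showing the bad profile is a genuine Nash equilibrium. One must rule out \emph{every} unilateral deviation, including threshold-ranking deviations that split a firm's coins between the dead-end edge and the cycle edge. The cleanest argument is to observe that, starting from the bad profile, a deviating firm $v$ can only raise its own assets if the extra flow it pushes into the cycle eventually returns to $v$; but since $v$'s cycle-successor is unchanged and keeps routing everything to its own dead end, no flow returns, so $v$ gains nothing—indeed, with zero internal assets everywhere, $v$ has no coins to reroute in the first place. Making this ``no coins to move'' observation precise (the firm's total assets are $a^x_v + a^i_v = 0$, so its strategy is vacuous and trivially a best response) is likely the slickest route, and I would lean on it to keep the Nash-equilibrium verification short while the optimum clearly circulates a large amount of money.
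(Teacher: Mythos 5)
Your proposal is correct and takes essentially the same approach as the paper: the paper's proof instantiates exactly the construction you describe, namely a unit-weight two-cycle between firms $1$ and $2$ with one dead-end edge leaving each, zero external assets, and the bad profile ranking the dead-end edges first, which is a Nash equilibrium of revenue $0$ (no firm can unilaterally close a cycle, since the other firm still leaks its first coin to its dead end) while the optimum circulates one unit for revenue $2$, giving an unbounded ratio. One caution: your ``no coins to reroute, hence vacuous strategy'' shortcut alone would not suffice, because a deviation changes the fixed-point set and could in principle admit a new positive clearing state; the leak argument you also give (the successor's top-ranked edge still points to a dead end, so no positive circulation can exist in the deviated profile) is the step that actually closes the proof.
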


\begin{proof}
Consider the game depicted in Fig.~\ref{fig:poaunbounded}. All edges have capacity 1, fixed supplies are $b_v = 0$ for all nodes $v$. Consider $\vecpi$ with $\pi_1 = (e_1, e_3)$ and $\pi_2=(e_2,e_4)$. It is a pure Nash equilibrium with $\rev(\vecpi) = 0$. No unilateral deviation can close a cycle and increase the value of $\vecfhat$. The optimal solution $\vecpi^*$ with $\pi^*_1 = (e_3, e_1)$ and $\pi^*_2=(e_4,e_2)$ has $\rev(\veca^*) = 2$.
\end{proof}
\begin{figure}[ht]
\begin{center}
\begin{scaletikzpicturetowidth}{0.45\textwidth}
\begin{tikzpicture}[scale=\tikzscale,auto,swap]
\node[vertex](v1) at (0,0){$1$};
\node[vertex](v2) at (3,0){$2$};
\node[vertex](v3) at (0,1.5){$3$};
\node[vertex](v4) at (3,1.5){$4$};
\path[edge] (v1) edge [bend left = 20,above] node{$e_3$} (v2);
\path[edge] (v2) edge [bend left = 20,below] node{$e_4$} (v1);
\path[edge] (v1) edge [left] node {$e_1$}(v3);
\path[edge] (v2) edge node{$e_2$}(v4);
\end{tikzpicture}
\end{scaletikzpicturetowidth}
\caption{The graph used in the proof of Porposition~\ref{prop:poaUnbounded}.}
\label{fig:poaunbounded}
\end{center}
\end{figure}
\null

To analyze the quality of strong equilibria, we again consider the unit-ranking game in the form of unit-capacity multi-edges. Consider an optimal circulation $\vecf^*$ of maximum social welfare in the circulation network $G'$. Since we have unit-capacity edges, we can assume that the optimal circulation has binary flows on each edge. Let $\mathcal{C}(\vecf^*) = \{C_1,\ldots,C_k\}$ be a decomposition of $\vecf^*$ into cycles of unit flow. We denote by 
\[ d = \min_{\vecf^*, \mathcal{C}(\vecf^*)} \; \max_{C \in \mathcal{C}(\vecf^*)} |C_i| \]
the min-max size of any cycle, in any decomposition $\mathcal{C}(\vecf^*)$ of any optimal circulation $\vecf^*$.

\begin{theorem}
\label{thm:spoaD}
In unit-ranking games, the price of anarchy for strong equilibria in terms of \rev\ is at most $d$.
\end{theorem}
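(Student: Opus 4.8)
The plan is to compare the revenue-maximizing clearing state $\vecahat$ of an arbitrary strong equilibrium $\vecpi$ against an optimal circulation, working throughout in the circulation network $G'$ in its unit-weight multi-edge form. First I would fix an optimal circulation $\vecf^*$ together with a decomposition into unit cycles $\C(\vecf^*)=\{C_1,\dots,C_k\}$ that attains the min-max value, so that $|C_j|\le d$ for every $j$; since $\vecf^*$ is binary on unit edges, the cycles are pairwise edge-disjoint. I would then split them into two types: \emph{Type A} cycles that pass through the auxiliary source $s$ (external assets entering, traversing a path inside $V$, and leaving as surplus) and \emph{Type B} cycles lying entirely within $V$ (genuine self-sustaining circulation). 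For a Type A cycle on internal nodes $v_1,\dots,v_m$, its contribution to $\rev(\vecf^*)$ is the single external unit seeding it plus the $m-1$ internal units it carries, i.e.\ a group value of $m = |C_j|-1 \le d$; for a Type B cycle on $m$ nodes the contribution is $m = |C_j| \le d$. These per-cycle group values partition $\rev(\vecf^*)=\sum_v a^x_v + \sum_{e\in E} f^*_e$ exactly. Moreover, by Proposition~\ref{prop:sFull} every optimal clearing state saturates all edges $(s,v)$, so the number $k_A$ of Type A cycles equals $\sum_v a^x_v$.

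The heart of the argument is a coalitional-deviation lemma for Type B cycles: for every Type B cycle $C_j$, at least one of its edges already carries positive flow in $\vecahat$. I would argue by contradiction. If all edges of $C_j$ had zero flow, then by the no-fraud constraint each firm $v_i$ on $C_j$ has left edge $(v_i,v_{i+1})$ unpaid and is hence insolvent. The coalition formed by the firms of $C_j$ can then deviate so that each $v_i$ reproduces its current payments on the first $\hat a_{v_i}$ coins and routes its next coin onto $(v_i,v_{i+1})$, which is a valid coin-ranking (monotone) strategy since that edge has unused capacity. The vector $\vecahat+\mathbf{1}_{C_j}$ (adding one unit to every firm on $C_j$) is then a feasible clearing state for the deviated profile, as each $v_i$ receives exactly one extra unit on its incoming cycle edge and forwards exactly one extra unit; by the lattice structure of Theorem~\ref{thm:clearingLattice}, the revenue-maximizing clearing state of the deviated profile dominates $\vecahat+\mathbf{1}_{C_j}$ coordinatewise. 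Thus every firm of $C_j$ strictly improves, contradicting that $\vecpi$ is a strong equilibrium.

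With the lemma in hand I would close the charging argument. Because Type B cycles are edge-disjoint, the distinguished positive-flow edges they produce are distinct, so the total internal equilibrium flow satisfies $\sum_{e\in E} \hat f_e \ge k_B$, where $k_B$ is the number of Type B cycles. Combining this with $\sum_v a^x_v = k_A$ (always realized as assets in any clearing state), I get $\rev(\vecpi)\ge k_A + k_B = k$. Since each cycle contributes at most $d$ to $\rev(\vecf^*)$ by the first paragraph, $\rev(\vecf^*)=\sum_j (\text{group value}_j)\le dk \le d\cdot\rev(\vecpi)$, which yields the bound $\rev(\vecf^*)/\rev(\vecpi)\le d$ for every strong equilibrium and hence a strong price of anarchy of at most $d$.

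The main obstacle I expect is making the Type B deviation lemma fully rigorous: I must write down the concrete monotone strategies for the coalition, verify that $\vecahat+\mathbf{1}_{C_j}$ is genuinely a fixed point of the deviated profile (checking capacity, the no-fraud constraint, and that insolvency of every cycle firm guarantees the extra coin is actually forwarded rather than kept as surplus), and then invoke the lattice to pass to the revenue-maximizing clearing state without losing strict improvement for any coalition member. A secondary care point is the bookkeeping for Type A cycles, namely that their external seed is realized in \emph{every} clearing state — so it may legitimately be charged against equilibrium revenue — while their internal value is still at most $d$; this is precisely what lets external pass-through value be absorbed into the factor $d$ rather than spoiling the bound.
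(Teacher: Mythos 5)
Your proof is correct and follows essentially the same approach as the paper's: decompose an optimal circulation into cycles of length at most $d$, show via a coalitional deviation that every such cycle must carry at least one unit of flow in the strong equilibrium, and then count cycle contributions. Your write-up is in fact somewhat more careful than the paper's terse version — the paper neither separates cycles through the auxiliary source $s$ (for which no deviation is needed, since the edges $(s,v)$ are saturated in every clearing state) from purely internal cycles, nor spells out the deviating coin-ranking strategies and the lattice argument showing that $\vecahat+\mathbf{1}_{C_j}$ is dominated by the new utility-defining clearing state — but these are refinements of, not departures from, the published argument.
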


\begin{proof}
Consider an optimal circulation $\vecf^*$ and a decomposition $\mathcal{C}(\vecf^*)$ such that all flow cycles $C_i \in \mathcal{C}(\vecf^*)$ have size at most $|C_i| \le d$. As observed in the proof of Theorem~\ref{thm:coinSPoS1}, this circulation yields the total flow of an optimal strategy profile $\veca^*$, i.e.,  
\[
\rev(\veca^*) = \sum_{v \in V} \sum_{e \in E^+(v)} f^*_e = \sum_{C_i \in \mathcal{C}(\vecf^*)} |C_i| - 2\sum_{v \in V} b_v \le \sum_{C_i \in \mathcal{C}(\vecf^*)} d \enspace.
\]

Now consider any strong equilibrium $\veca$ in the unit-ranking game with clearing state $\vecfhat$. The flow $\vecfhat$ can be assumed to have binary edge flows. Suppose there is a cycle $C_i \in \mathcal{C}(\vecf^*)$ such that $a_e(\vecfhat_v) = 0$ for all $e = (v,w) \in C_i$. Then the nodes in this cycle have an incentive to jointly deviate and place the edges of $C_i$ on the first position in their ranking. Then the clearing state $\vecfhat$ will emerge as before, adding a flow of 1 along the cycle $C_i$. This is a profitable deviation for the nodes of $C_i$. 

Consequently, for every cycle $C_i \in \mathcal{C}(\vecf^*)$ there must be at least one edge $e = (u,v) \in C_i$ such that $a_e(\vecfhat_u) = 1$. Thus, the total flow in the strong equilibrium $\veca$ is 
\[
\rev(\veca) \ge \sum_{C_i \in \mathcal{C}(\vecf^*)} 1
\]
and, hence, the ratio is at most $d$.
\end{proof}

\begin{figure}[t]
\centering
\begin{tikzpicture}[
  point/.style={circle,draw,minimum size=#1},
  point/.default=0pt 
]
  \foreach[count=\i,evaluate=\i as \angle using (\i-1)*360/5-54] \text in {%
      $v_1$,$v_2$,$v_3$,$v_{4}$,$v_5$
    }
    \node[vertex] (\i) at (\angle:1.5) {\small \text};
  \path[edge] (1) edge (2);
  \path[edge] (2) edge (3);
  \path[edge] (3) edge (4); 
  \path[edge] (4) edge (5);
  \path[edge] (5) edge (1);
  
    \node[vertex] (22) at ($(2.2,-0.8) + (-18:1.5)$) {\small $v_{2}^{2}$};
    \node[vertex] (21) at ($(2.2,-0.8) + (54:1.5)$) {\small $v_{2}^{1}$};
    \node[vertex] (23) at ($(2.2,-0.8) + (270:1.5)$) {\small $v_{2}^{3}$};

  \path[edge] (2) edge (21);
  \path[edge] (21) edge (22);
  \path[edge] (22) edge (23); 
  \path[edge] (23) edge (1);
      
    \node[vertex] (32) at ($(1.4,2) + (54:1.5)$) {\small $v_{3}^{2}$};
    \node[vertex] (31) at ($(1.4,2) + (126:1.5)$) {\small $v_{3}^{1}$};
    \node[vertex] (33) at ($(1.4,2) + (-18:1.5)$) {\small $v_{3}^{3}$};
    
  \path[edge] (3) edge (31);
  \path[edge] (31) edge (32);
  \path[edge] (32) edge (33); 
  \path[edge] (33) edge (2);

    \node[vertex] (42) at ($(-2.2,-0.8) + (198:1.5)$) {\small $v_{5}^{2}$};
    \node[vertex] (41) at ($(-2.2,-0.8) + (270:1.5)$) {\small $v_{5}^{1}$};
    \node[vertex] (43) at ($(-2.2,-0.8) + (126:1.5)$) {\small $v_{5}^{3}$};
    
  \path[edge] (5) edge (41);
  \path[edge] (41) edge (42);
  \path[edge] (42) edge (43); 
  \path[edge] (43) edge (4);

    \node[vertex] (52) at ($(-1.4,2) + (126:1.5)$) {\small $v_{4}^{2}$};
    \node[vertex] (51) at ($(-1.4,2) + (198:1.5)$) {\small $v_{4}^{1}$};
    \node[vertex] (53) at ($(-1.4,2) + (54:1.5)$) {\small $v_{4}^{3}$};
    
  \path[edge] (4) edge (51);
  \path[edge] (51) edge (52);
  \path[edge] (52) edge (53); 
  \path[edge] (53) edge (3);

\end{tikzpicture}
\caption{A unit-ranking game with $d=5$ and a price of anarchy for strong equilibria of $d-1 = 4$.}
\label{fig:strongPoA}
\end{figure}

\begin{proposition}
\label{prop:spoa_unit_edge}
For every $d \ge 2$, there is a unit-ranking game in which the price of anarchy for strong equilibria in terms of \rev\ is at least $d-1$.
\end{proposition}
\begin{proof}
The game is given by a graph $G$ with $d+(d-1)(d-2)$ nodes. $G$ is constructed as follows. The nodes $v_1,\dots,v_d$ are called \emph{central nodes} and they form a cycle of length $d$. For each $i=1,\dots,d-1$, there are nodes $(v_{i,j})_{j = 1,\dots,d-2}$ that form additional cycles of length $d$ with the edge $(v_i,v_{i+1})$. Thus, the set of edges is given by
\begin{align*}
E&=\{(v_i,v_{i+1}) \mid i \in \{1,\dots,d-1\}\} \cup \{(v_d,v_1)\}\\
& \quad \cup \bigcup_{i=2,\dots,d} \big((v_{i},v_{i}^{1}) \cup \{(v_{i}^{j},v_{i}^{j+1}) \mid j=1,\dots,d-3\} \cup (v_{i}^{d+2},v_{i-1})\big)\;.
\end{align*}
All nodes have fixed supply of 0. All edges have unit capacity. An example of the instance with $d = 5$ is depicted in Fig.~\ref{fig:strongPoA}. Observe that only nodes $v_i$, $i=2,\dots,d$ have multiple outgoing edges and, thus, these nodes are the only ones with a non-trivial strategy choice.

Since all edges have capacity 1, we will view this game equivalently as an edge-ranking game. We will show that the strategy profile $\pi_i=\big((v_i,v_{i+1}),(v_i,v_{i}^{1})\big)$, for $i=2,\ldots,d-1$, and $\pi_d=\big((v_d,v_{1}),(v_d,v_{d}^{1})\big)$ is a strong equilibrium in the game. In order to see this, let $\vecfhat$ be the clearing state of $\vecpi$. Note that
\[\vecfhat_v = \begin{cases}1 & \text{ if }v=v_1,v_2,\dots,v_d\;,\\ 0 & \text{ otherwise\;,}\end{cases}\]
that is, $\rev(\vecpi) = d$. Now suppose there is a non-empty coalition of nodes $C \subseteq (v_2,\dots,v_d)$ that all strictly increase their utility by a joint deviation. Note that $v_d$ only has a single incoming edge that is saturated in $\vecpi$, so $v_d \notin C$. Thus, the cycle $v_d, v_{d}^{1}, \dots, v_{d}^{d-2},v_{d-1}$ cannot carry any flow. We conclude that $v_{d-1}$ has only a single edge that can carry flow. Iterating this argument yields $C = \emptyset$, a contradiction.

In constrast, the strategy profile with $\pi_i=\big((v_i,v_{i}^{1}),(v_i,v_{i+1})\big)$ for $i=2,\ldots,d-1$ and $\pi_d=\big((v_d,v_{d}^{1}),(v_d,v_{1})\big)$ has social welfare of $(d-1)d$. Thus, the price of anarchy for strong equilibria in this instance is at least $d-1$.
\end{proof}

For completeness, let us also provide a cumulative characterization of prices of anarchy for pure Nash and strong equilibria for the three other quality functions mentioned in the previous section.

\begin{proposition}
	\label{prop:otherPoA}
	In unit-ranking games, the prices of anarchy for pure Nash and strong equilibria in terms of 
	\begin{itemize}
		\item \textsc{EW} are unbounded,
		\item \textsc{NSW} are unbounded,
		\item \textsc{FSN} are exactly $n-1$.
	\end{itemize}
	The lower bounds apply even in games with $d=3$.
\end{proposition}

\begin{proof}
Consider the game depicted in Fig.~\ref{fig:triangle}. All edges have capacity 1, fixed supplies are $b_v = 0$ for all nodes. Only $v_1$ has a non-trivial strategy choice, $\pi_1 = (e_1, e_2)$ or $\pi_2=(e_2,e_1)$. Both choices yields a utility of 1 for $v_1$, so both are a best response for $v_1$. Since $v_1$ is the only player with a choice, both options represent a strong equilibrium.
\begin{figure}[ht]
	\begin{center}
		\begin{scaletikzpicturetowidth}{0.35\textwidth}
			\begin{tikzpicture}[scale=\tikzscale,auto,swap]
				\node[vertex](v1) at (0,0){$v_1$};
				\node[vertex](v2) at (3,0){$v_2$};
				\node[vertex](v3) at (1.5,1.5){$v_3$};
				\path[edge] (v1) edge [bend left = 20,above] node{$e_1$} (v2);
				\path[edge] (v2) edge [bend left = 20,below] node{$e_4$} (v1);
				\path[edge] (v1) edge [left] node {$e_2$}(v3);
				\path[edge] (v3) edge [right] node{$e_3$}(v2);
			\end{tikzpicture}
		\end{scaletikzpicturetowidth}
	\end{center}
	\caption{The graph used in the proof of Proposition~\ref{prop:otherPoA}.}
	\label{fig:triangle}
\end{figure}

Both \textsc{EW} and \textsc{NSW} are 0 in the strong equilibrium when $v_1$ plays $\pi_1$ and 1 in the optimal state with $\pi_2$. Hence, the prices of anarchy for both pure Nash and strong equilibria are unbounded for both these objectives. In the optimal state, the unique flow cycle has length 3. Towards the \textsc{FSN} objective, we note that in every strategy profile $\veca$ of every unit-ranking game there must be at least one node that is fully saturated. Otherwise, every node has an outgoing active edge, and hence there must be cycle of active edges -- a contradiction to the property that the clearing state is maximal. Clearly, since there can be at most $n$ saturated nodes, the prices of anarchy for pure Nash and strong equilibria are at most $n$. 

More precisely, suppose there is a state $\veca^*$ with \textsc{FSN}$(\veca^*) = n$. In Proposition~\ref{prop:solventUnique} we proved that the strategy choices of saturated nodes have no impact on the clearing state. As such, if \emph{all} nodes are fully saturated, then for each node her utility is independent of the entire strategy profile. In this case, prices of anarchy for pure Nash and strong equilibria are 1. The prices are non-trivial only if every strategy profile has at least one non-saturated node. Hence, they can be at most $n-1$.

To show the lower bound of $n-1$, we slightly extend the construction above. We add nodes $v_4,\ldots,v_n$, each with two edges $(v_1,v_i)$ and $(v_i,v_2)$ of capacity 1. Edges $(v_1,v_2)$ and $(v_2,v_1)$ get an increased capacity of $n-2$. Again, $v_1$ is the only player with a non-trivial strategy choice. It is easy to see that for any strategy, $v_1$ obtains a utility of $n-2$. As such, every strategy is a best response, and every state represents a pure Nash and a strong equilibrium. In the worst equilibrium $\veca$, $v_1$ exchanges $n-2$ units of flow with $v_2$. $v_2$ gets saturated and \textsc{FSN}$(\veca) = 1$. In the optimal state $\veca^*$, $v_1$ routes $n-2$ units of flow in single units via $v_3,\ldots,v_n$ to $v_2$. Then $v_2,\ldots,v_n$ get saturated and \text{FSN}$(\veca^*) = n-1$. Note that in $\veca^*$ every cycle has length equal to 3.
\end{proof}

\section{Edge-ranking games.}
\label{sec:edge}

\subsection{Existence and computation of equilibria.}

With unit-ranking strategies we assume that nodes have flexibility in allocation of single particles. In this section, we focus on strategies, in which nodes simply rank their outgoing edges and allocate flow in order of this ranking until they run out of supply or all edges are saturated. In contrast to unit-ranking games, the restriction to rankings over edges (with different capacity) can destroy the existence of (optimal) stable states. In fact, there are even games without a single pure Nash equilibrium.
\begin{figure}[ht]
\begin{center}
\begin{tikzpicture}[auto,swap,scale=0.99]
\node[vertex](v1) at (0,0){$v_1$};
\node[vertex](v4) at (-2,-2){$v_4$};
\node[vertex,label=above:\fbox{2}](v2) at (-4,0){$v_2$};
\node[vertex](v6) at (-6,0){$v_5$};
\node[vertex](v7) at (2,-2){$v_6$};
\node[vertex,label=above:\fbox{2}](v3) at (4,0){$v_3$};
\node[vertex](v9) at (6,0){$v_7$};
\path[edge] (v1) edge [left] node{$4$} (v4);
\path[edge] (v4) edge [left] node {$2$}(v2);
\path[edge] (v2) edge [above] node{$6$}(v1);
\path[edge] (v2) edge [bend left = 20, below] node{$6$} (v6);
\path[edge] (v6) edge [bend left = 20, above] node{$1$} (v2);
\path[edge] (v1) edge node[right] {$4$} (v7);
\path[edge] (v7) edge [right] node {$2$}(v3);
\path[edge] (v3) edge [above] node{$6$}(v1);
\path[edge] (v3) edge [bend right = 20, below] node{$6$} (v9);
\path[edge] (v9) edge [bend right = 20, above] node{$1$} (v3);
\end{tikzpicture}
\end{center}
\caption{An edge-ranking game without a pure Nash equilibrium.}
\label{fig:noNash}
\end{figure}

\begin{proposition}
There is an edge-ranking game without a pure Nash equilibrium.
\label{prop:noNash}
\end{proposition}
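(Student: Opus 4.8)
The plan is to exhibit the specific game in Figure~\ref{fig:noNash} and argue by case analysis that no edge-ranking strategy profile can be a pure Nash equilibrium. The instance is built from two symmetric gadgets sharing the central node $v_1$, so I would first understand the strategic tension inside one gadget and then argue that the two gadgets cannot be simultaneously stabilized. Concretely, the only firms with a genuine strategic choice are those with more than one outgoing edge of positive weight; by inspection the interesting decisions lie at $v_2$ (which must rank its edge $(v_2,v_1)$ of weight $6$ against the edge $(v_2,v_6)$ of weight $6$) and at $v_3$ (symmetrically ranking $(v_3,v_1)$ against $(v_3,v_9)$), while $v_1$ must decide how to split its assets between the two gadgets via $(v_1,v_4)$ and $(v_1,v_7)$, both of weight $4$.

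First I would compute, for each fixed ranking of the two strategic players in a gadget, the resulting clearing state using the maximum-revenue convention from Section~\ref{sec:clearing}. The key quantitative feature to extract is how much money circulates back to $v_1$ through a gadget as a function of (i) how much $v_1$ feeds into that gadget and (ii) how the internal player ranks its two outgoing edges. The design of the weights (the asymmetry between the ``long cycle'' $v_1\to v_4\to v_5\to v_2\to v_1$ carrying capacity bottlenecked at the weight-$2$ edge $(v_5,v_2)$, versus the short two-cycle $v_2\leftrightarrow v_6$ with weights $6$ and $1$, plus the external assets of $2$ at $v_2$ and $v_3$) is what creates a non-monotone best-response structure: whether $v_2$ prefers to prioritize returning money to $v_1$ or diverting it to $v_6$ depends on how much inflow $v_2$ receives, which in turn depends on $v_1$'s split, which depends on which gadget is currently more lucrative for $v_1$.

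The heart of the argument is a best-response cycling argument. I would show that the two gadgets are "competing" for $v_1$'s limited output: $v_1$ wants to send its assets to whichever gadget currently returns more, but the receiving internal player ($v_2$ or $v_3$) then has an incentive to flip its ranking (diverting money into the short $v_6$ / $v_9$ cycle rather than returning it), which makes that gadget less attractive and pushes $v_1$ to switch to the other gadget, and so on. Formally I would enumerate the finitely many combinations of rankings at $v_1,v_2,v_3$, compute the clearing state and each player's utility in each, and verify that every profile admits a profitable unilateral deviation by one of these three players. The main obstacle will be organizing this case analysis cleanly: there are several ranking combinations, and for each one the clearing state must be determined correctly (respecting the revenue-maximal fixed point), so the real work is tabulating the induced assets $\hat a_{v_1}, \hat a_{v_2}, \hat a_{v_3}$ and exhibiting the deviating player in each case. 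Once the weights are chosen so that the two gadgets cannot both simultaneously be in a ``stable'' configuration with respect to $v_1$'s preferred split, the absence of a pure Nash equilibrium follows, completing the proof.
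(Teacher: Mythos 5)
Your proposal is essentially the paper's own proof: it uses exactly the instance of Fig.~\ref{fig:noNash} and establishes non-existence by exhaustively enumerating the ranking profiles of the only strategic firms $v_1,v_2,v_3$, computing the (maximum-revenue) clearing state and utilities in each case, and exhibiting a profitable unilateral deviation. The paper's only streamlining is to fix $\pi_{v_1}=((v_1,v_4),(v_1,v_7))$ w.l.o.g.\ by the left--right symmetry of the network, after which the $2\times 2$ utility matrix for $v_2$ and $v_3$ has a matching-pennies structure with no stable cell (so the cycling runs between $v_2$ and $v_3$ at fixed $\pi_{v_1}$, rather than through $v_1$'s switching as in your narrative); carrying out your slightly larger tabulation would yield the same conclusion.
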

\begin{proof}
Consider the game in Fig.~\ref{fig:noNash}. The capacities of the edges are depicted next to the edges. Nodes $v_2$ and $v_3$ each have fixed supply of $2$, the other nodes have fixed supply 0. Nodes $v_1$, $v_2$ and $v_3$ are the only ones with multiple outgoing edges and, thus, a non-trivial strategy choice. Due to the symmetry of the graph, we can assume w.l.o.g.\ $\pi_{v_1}=((v_1,v_4),(v_1,v_6))$. There are two possible strategy choices for each of the nodes $v_2$ and $v_3$. Checking all four resulting strategy profiles yields the following utility matrix for nodes $v_2$ and $v_3$:

\begin{center}
\begin{tabular}{l || lcr | lcr}
\diagbox[innerwidth=3cm,innerleftsep=.7cm,innerrightsep=.7cm]{$\pi_{v_2}$}{$\pi\textsl{}_{v_3}$} & \multicolumn{3}{c|}{$((v_3,v_1),(v_3,v_7))$} & \multicolumn{3}{c}{$((v_3,v_7),(v_3,v_1))$} \\\hline\hline
&   & & 4 &   & & 3\\
$((v_2,v_1),(v_2,v_5))$ & & & & & & \\
& 4 & &   & 4 & &  \\\hline
&   & & 2 &   & & 3\\
$((v_2,v_5),(v_2,v_1))$  & & & & & & \\ 
& 5 & &   & 3 & &  \\
\end{tabular}
\end{center}

\medskip

Inspecting the utilities, we see that there is no pure Nash equilibrium.
\end{proof}

The next theorem shows that a number of natural decision and optimization problems in edge-ranking games are indeed computationally intractable. Note that for unit-ranking games, these problems are either trivial (a strong equilibrium always exists) or can be solved in polynomial time (compute a strong equilibrium that represents a profile with maximum total flow).

\begin{theorem}
\label{thm:edgeExistNPC}
In an edge-ranking game the following problems are strongly \classNP-hard:
\begin{compactenum}
\item Deciding whether a pure Nash equilibrium exists or not.
\item Deciding whether a strong equilibrium exists or not.
\item Computing a pure Nash equilibrium, when it is guaranteed to exist.
\item Computing a strong equilibrium, when it is guaranteed to exist.
\item Computing a strategy profile $\vecpi$ that maximizes $\rev(\vecpi)$.
\item For a given strategy profile $\vecpi$, deciding whether a given node has a best response resulting in utility at least $k$.
\end{compactenum}
\end{theorem}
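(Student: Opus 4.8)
\bigskip

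The plan is to establish all five hardness results through a single reduction, or a small family of closely related reductions, from a strongly \classNP-hard problem. Given that the paper already invokes \SAT\ and \ThreeDM\ in its macro definitions, I would attempt a reduction from a variant of satisfiability (or directly from \ThreeDM) to the question of equilibrium existence. The central engine should be a gadget that forces a ``choice'' in the edge-ranking of a firm with two outgoing edges of \emph{different} weights, so that ranking one edge first versus the other corresponds to setting a Boolean variable. Proposition~\ref{prop:noNash} already exhibits a small no-equilibrium instance built from two coupled cycles of differing weights; I would generalize that construction so that the existence of a pure Nash (or strong) equilibrium is equivalent to the satisfiability of the formula. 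The crux is that, because edges carry arbitrary weights, committing assets to a heavy edge first can starve a downstream cycle, whereas committing to a light edge first can feed it -- and the profitability of such a deviation depends, through the network of cycles, on the combined choices of several firms, exactly mimicking clause satisfaction.

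\bigskip

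The key steps, in order, would be: (i) fix the source problem, say an instance $\phi$ of (a suitable restriction of) \SAT, and describe variable-gadgets -- firms whose two outgoing edges of distinct weights encode a true/false assignment -- together with clause-gadgets that are cycles passing through the relevant variable firms; (ii) calibrate the edge weights and external assets so that, in any clearing state, a clause-cycle carries positive flow if and only if at least one of its literals is satisfied, and so that an unsatisfied clause creates a local instability (a profitable unilateral deviation) analogous to the oscillation in Fig.~\ref{fig:noNash}; (iii) prove the forward direction, that a satisfying assignment yields a stable strategy profile (pure Nash, and with care also strong), and the reverse direction, that any pure Nash equilibrium induces a consistent satisfying assignment. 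This settles items~1 and~2 for \emph{deciding existence}. For items~3 and~4, I would augment the construction so that a trivially-stable ``escape'' equilibrium always exists (guaranteeing existence), but extracting from any equilibrium the answer to whether the \emph{interesting} equilibrium is present would decode a satisfying assignment -- making \emph{computing} an equilibrium as hard as \SAT\ even under the promise of existence. For item~5, I would arrange that the maximum total revenue is attained exactly when all clause-cycles carry flow, i.e.\ when $\phi$ is satisfiable, so that computing a revenue-maximal profile decides satisfiability.

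\bigskip

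The main obstacle I anticipate is the \emph{strongness} requirement in two places. First, for item~2 (and the strong-equilibrium half of items~3--4), the gadget must rule out not only unilateral but \emph{coalitional} deviations; since Theorem~\ref{thm:spoaD} shows that coalitions can always light up any all-zero optimal cycle, the construction must ensure that whenever a coalition could jointly create a profitable circulation, the formula is satisfiable, and conversely. Balancing the weights so that coalitional deviations are available exactly when a variable-assignment fails is delicate, because a coalition has strictly more power than a single deviator and could ``repair'' an otherwise unstable profile. Second, establishing uniqueness of the decoded assignment -- so that an equilibrium cannot sneak through by mixing inconsistent local choices -- will require a careful invariant argument on the induced circulation, likely leaning on the lattice structure of clearing states from Theorem~\ref{thm:clearingLattice} and the circulation representation of Proposition~\ref{prop:sFull}. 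I expect the bulk of the technical work to lie in verifying these two directions simultaneously for the strong-equilibrium variant rather than in the Nash variant, which should follow the pattern of Proposition~\ref{prop:noNash} more directly.
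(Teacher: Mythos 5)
Your plan for items~3 and~4 contains a genuine logical gap. You propose to guarantee existence by building in a ``trivially-stable escape equilibrium'' that always exists, and then to argue that computing an equilibrium is hard because recognizing whether the \emph{interesting} equilibrium is present decodes a satisfying assignment. But this defeats the hardness claim: if the escape profile is an equilibrium regardless of satisfiability and is easy to write down, an algorithm for ``compute a pure Nash (or strong) equilibrium when one is guaranteed to exist'' can simply output the escape profile, so the search problem is polynomial on your instances. Hardness of the search problem under a promise of existence requires that \emph{every} equilibrium of the constructed game reveals the answer to the \classNP-hard question. This is exactly how the paper proceeds: it reduces from \ThreeDM, building a game with a central firm $v$, capacity-$3$ edges $(v,u)$ to set-vertices, and unit edges $(u,t)$ and $(t,v)$; it proves by a best-response-dynamics argument that a strong equilibrium always exists in this class of games, and then shows that in \emph{any} pure Nash equilibrium $a_v = 3k$ if and only if the \ThreeDM\ instance has an exact cover. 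Hence any equilibrium one computes decides the instance --- there is no benign equilibrium to escape to. (Your item~5 logic is sound and essentially matches the paper's, where total revenue equals $3a_v$, so a revenue-maximal profile decides the instance after evaluating its clearing state.)

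For items~1 and~2 your approach is the paper's in spirit --- an instability gadget in the style of Fig.~\ref{fig:noNash} that is pacified by inflow exactly when the combinatorial problem is solvable --- but note how the paper sidesteps the two difficulties you flag as delicate. In its existence construction, the choosing firm $v$ has \emph{no incoming edges} and fixed external assets $3k$, so its utility is independent of its own ranking: every strategy of $v$ is vacuously a best response, and the entire burden of (in)stability is outsourced to $|T|$ disjoint copies of the Fig.~\ref{fig:noNash} gadget, one per element $t_i$, fed by a unit edge $(t_i,v_9^i)$. If no exact cover exists, every ranking of $v$ leaves some $t_i$ with zero inflow, and the corresponding copy admits no pure Nash equilibrium; if a cover exists, all copies are pacified and a strong equilibrium exists. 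In your distributed-variable design the variable firms have genuine incentives (heavy versus light edges), so you must additionally exclude equilibria whose flow patterns correspond to no assignment, and you must verify coalitional stability across interacting gadgets. These obstacles are perhaps surmountable, but they are avoidable: making the choosing player's utility independent of its own strategy is precisely the device that collapses both the consistency problem and the coalition problem at once.
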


\begin{proof}
We start by proving hardness for computing a pure Nash or a strong equilibrium when it is guaranteed to exist. 

\paragraph{Hardness of computing Nash or strong equilibria.}
Consider an instance $I$ of the problem \ThreeDM. $I$ is given by a finite set $T$ with $|T|=3k$ and a set $U \subseteq T \times T \times T$. \cite{Karp72} proved it is strongly \classNP-complete to decide whether there is a subset $W \subseteq U$ such that $|W| = k$ and no two elements of $W$ have a non-empty intersection. The existence of such a set $|W|$ would be an exact cover of $T$. Given an instance $I$ of \ThreeDM, we construct an edge-ranking game as follows. Suppose there is a central node $v$. For each set $u \in U$, add a node $u$ and connect $v$ to $u$ by an edge $(v,u)$ with capacity $c((v,u))=3$. For each pair $t,u$ with $t \in T$, $u \in U$ and $t \in u$ add a node $t$ and add an edge $(u,t)$ with capacity $c((u,t))=1$. Finally, connect each element node $t \in T$ by an edge $(t,v)$ with $c((t,v))=1$ to $v$. 

The idea is that computing any pure Nash or strong equilibrium reduces to finding a best response for player $v$. This best response, however, gives utility of $3k$ if and only if $I$ has a solution. Hence, by computing a pure Nash or strong equilbrium, we obtain a best response for $v$ and thereby a certificate as to whether $I$ is solvable or not (and vice versa).

Let us first argue that there is always a strong equilibrium in this edge-ranking game. First, we note that all nodes $t \in T$ have a single outgoing edge. We fix an arbitrary feasible strategy vector $\vecpi'$. We will argue that best-response dynamics yield a strong equilibrium. For every strategy of $v$, there is at most one node $u_i$ with $a_{u_i} \in \{1,2\}$. For all other nodes $u \in U \setminus \{u_i\}$, every strategy $pi_u$ is a best response for $u$ due to Proposition~\ref{prop:solventUnique}. We conclude that if $\vecpi'$ is not a strong equilibrium, there is a coalition $S$ of players with an improvement move such that all players in $S$ strictly increase their utility. If $v \notin S$, there is a player $u_i \in S \cup U$. The improving move of $u_i$ also increases the utility of $v$. Thus, the utility of $v$ increases in every step of the dynamics. This shows that the dynamics terminate with a strong equilibrium.

Let $\vecpi$ be any Nash equilibrium in the edge-ranking game. We show the following claim. There is a subset $W$ such that $|W| = k$ and no two elements of $W$ have a non-empty intersection in $I$ if and only if $u_v(\vecpi) = |T|$ in $\vecpi$.

First, let us assume that $\vecpi$ is a Nash equilibrium with $u_v(\vecpi) = |T|$. Since each outgoing edge from $v$ has capacity $3$ and $|T|=3k$, we know by the definition of edge-ranking games that $|\{w \in U \mid u_w(\vecpi) = 3\}| = k$. We denote these vertices by $w_1, \dots, w_k$. Thus, $u_{w_i}(\vecpi) = 3$ for $i\leq k$ and $u_{w_i}(\vecpi) = 0$ for all $i>k$. We will show that the sets corresponding to vertices $w_1, \dots, w_k$ form a solution to $I$. Suppose there are two sets $w_1, w_2$ that have a non-empty intersection. If this is the case, there is an element $t \in T$ with $t \in w_1 \cap w_2$. There are edges $(w_1,t), (w_2,t)$ in the edge-ranking game that carry flow. This is a contradiction to the fact that $u_v(\vecpi) = |T|$.

Now, let us assume there is a solution $w_1, \dots, w_k$ to $I$. We will show that every pure Nash equilibrium $\vecpi$ in the edge-ranking game yields utility $u_v(\vecpi) = 3k$. The total capacity of all incoming edges of $v$ is exactly $3k$, so $u_v(\vecpi) \leq 3k$ clearly holds for all strategy profiles. We will now argue that independent of the strategy choices of all other players, node $v$ can always obtain $u_v(\vecpi) = 3k$. Let $\pi_v = \left((v,w_1), \dots, (v,w_k), \dots \right)$. Since $w_1, \dots, w_k$ exactly cover all elements $t \in T$, this induces $3k$ cycles in the \TCI\ algorithm discussed in Section~\ref{sec:topCycle}. This is independent of the strategy choices of all other nodes since they always have the property that either all outgoing edges are fully saturated, or there is no flow at all. 

This shows that even in a class of games with guaranteed existence, computing a pure Nash equilibrium or a strong equilibrium is strongly \classNP-hard.

\paragraph{Hardness of computing social optima.}
Consider the previous construction. Every simple cycle in the network involves $v$ and exactly two other nodes. There are no fixed supplies. As such, the social welfare in the system is exactly $3u_v(\vecpi)$. Hence, by computing a best-response for $v$, we also obtain a strategy profile with maximum social welfare. This proves strong \classNP-hardness for the computation of optimal strategy profiles. Note that this result is in contrast to unit-ranking games, where an optimal strategy profile can be computed in strongly polynomial time.

\paragraph{Hardness of deciding existence of Nash and strong equilibria.}
For hardness of the decision version, we adjust the construction from the first part of this proof and combine it with the game without pure Nash equilibrium in Fig.~\ref{fig:noNash}. Observe that fixed supplies of $1$ for node $v_7$ in Fig.~\ref{fig:noNash} would lead to the dominant strategy $\pi_{v_3} = \left((v_3,v_1),(v_3,v_7)\right)$ for $v_3$. Given this strategy, it is straightforward to derive the utilities for all strategy choices of $v_1$ and $v_2$.
\begin{center}
	\begin{tabular}{l || lcr | lcr}
		\diagbox[innerwidth=3cm,innerleftsep=.7cm,innerrightsep=.7cm]{$\pi_{v_1}$}{$\pi\textsl{}_{v_2}$} & \multicolumn{3}{c|}{$((v_2,v_1),(v_2,v_5))$} & \multicolumn{3}{c}{$((v_2,v_5),(v_2,v_1))$} \\\hline\hline
		&   & & 4 &   & & 4\\
		$((v_1,v_4),(v_1,v_6))$ & & & & & & \\
		& 9 & &   & 3 & &  \\\hline
		&   & & 4 &   & & 3\\
		$((v_1,v_6),(v_1,v_4))$  & & & & & & \\ 
		& 9 & &   & 5 & &  \\
	\end{tabular}
\end{center}
Thus, given that $v_7$ has fixed supply of 1 and $v_3$ plays the dominant strategy $\pi_{v_3} = \left((v_3,v_1),(v_3,v_7)\right)$, the strategy profile with $\pi_{v_1} = \left((v_1,v_6),(v_1,v_4)\right)$ and $\pi_{v_2} = \left((v_2,v_1),(v_2,v_5)\right)$ is a strong equilibrium.

We use this insight to design a class of edge-ranking games based on instances of \ThreeDM\ with the following property. If instance $I$ has a solution, the game has a strong equilibrium (and hence a pure Nash equilibrium); if $I$ has no solution, the game has no pure Nash equilibrium (and hence no strong equilibrium). We adjust the construction from the first part of this proof and combine it with $|T|$ copies of the example of Fig.~\ref{fig:noNash}: In the construction from the first part of the proof above, delete all edges $(t,v)$ for all $t \in T$ and add a fixed supply of $b_v = 3k$ for node $v$. We know that each $t \in T$ receives an inflow of 1 if and only if there is a solution to the instance $I$ of \ThreeDM. The instance depicted in Fig.~\ref{fig:noNash} is copied $|T|$ times. We denote the $|T|$ copies of node $v_7$ by $v_7^1, v_7^2, \dots, v_7^{|T|}$. Add an edge with capacity $1$ from each $t_i \in T$ to the corresponding copy of $v_7$, i.e., edges $(t_i, v_7^i)$ for $i=1,\ldots,|T|$. If there is a solution to $I$, all nodes $t_i \in T$ receive a flow of $1$. This flow is forwarded to nodes $v_7^i$ and can be seen as their fixed supply. Hence, there is a strong equilibrium in all copies and also in the game as a whole. On the other hand, if there is no solution to $I$, there is always some $t_i \in |T|$ with $u_{t_i}(\vecpi) = 0$, say, w.l.o.g.\ $u_{t_1}(\vecpi) = 0$. The corresponding node $v_7^1$ does not receive any inflow from the remaining network. Hence, there is no Nash equilibrium in the respective copy and no Nash equilibrium in the game.

\paragraph{Hardness of deciding the existence of a best response with revenue at least $k$.}
The instance constructed in the proof of Theorem~\ref{thm:bestRespNPC} is an edge-ranking game and the proof immediately carries over.
\end{proof}

\begin{remark}\rm
It is unclear whether the problem of deciding existence of a pure Nash equilibrium in an edge-ranking game is in \classNP\ or not, due to \classNP-hardness of verification that a node plays a best response (see Theorem~\ref{thm:bestRespNPC}). It is easy to see that the decision problem is in $\Sigma_2^p$. The problem could be $\Sigma_2^p$-complete, similar to related decision problems in strategic max-flow games~\citep{KupfermanVV17,GuhaKV19}. Proving such a result is an interesting open problem.
\end{remark}

\subsection{Total flow in equilibrium.}

For edge-ranking games, the lower bound on the price of anarchy for Nash equilibria observed in Proposition~\ref{prop:poaUnbounded} applies, i.e., the price of anarchy for Nash equilibria in terms of \rev\ can be unbounded. The restriction to edge-ranking strategies can have a drastic effect even on the total flow in the best equilibrium in case it exists. In particular, in edge-ranking games the price of stability for strong equilibria in terms of \rev\ can be as high as $\Omega(n)$, and the price of stability for Nash equilibria might even be unbounded.

\begin{proposition}
For every $\varepsilon > 0$, there is an edge-ranking game with price of stability for strong equilibria in terms of \rev\ of at least $n/2 - \varepsilon$.
\label{prop:poa}
\end{proposition}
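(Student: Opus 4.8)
The plan is to build a parameterized family of edge‑ranking games in which the social optimum lets money circulate through essentially all $n$ firms, giving total revenue $\Theta(n)$, while every strong equilibrium can only sustain circulation through a constant number of firms, giving revenue $\bigO(1)$. The key observation that makes the target ratio exactly $n/2$ (rather than something smaller) is that circulation can be made self‑sustaining with negligible external assets: by the revenue‑maximal choice of clearing state, a cycle of firms that all forward what they receive is a valid clearing state even when $\sum_v a^x_v$ is tiny. Hence $\rev(\vecpi,\vecahat)$ is dominated by the internal flow, and the ratio is governed purely by \emph{how many firms carry flow} in the optimum versus in any strong equilibrium. The mechanism forcing equilibria to be bad is exactly the feature separating edge‑ from coin‑ranking games: a firm cannot split one payment across two outgoing edges but must fully fund a higher‑ranked edge before a lower‑ranked one. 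Where the coin‑ranking bound of Theorem~\ref{thm:spoaD} lets a coalition resurrect any missing optimal cycle by giving its unit edges top priority, here the weights are chosen so that funding the cycle edge of a firm is \emph{incompatible} with funding the second edge needed to close a neighboring cycle, so a large circulation is self‑defeating.

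Concretely, I would assemble the instance from the asymmetric gadget already appearing in Figure~\ref{fig:noNash}: each firm on a long main cycle is attached to a short return loop whose outgoing edge carries a large weight and whose returning edge has weight $1$, together with a small amount of external assets. Chaining $\Theta(n)$ such gadgets along one cycle, I would tune the weights so that (i) there is an edge‑ranking profile pushing flow around the whole main cycle that gives every firm positive assets, witnessing $\OPT = \Theta(n)$; and (ii) from any profile carrying a large circulation, a firm receiving ``too little'' relative to its loop can re‑rank its large loop edge first, divert the incoming flow, and -- because the return edge has weight $1$ -- strictly improve its own assets at the cost of starving the continuation of the main cycle.

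The body of the proof then splits into four steps. First, exhibit the main‑cycle profile and compute its revenue to establish the optimum is $\Theta(n)$ (in fact within an additive constant of $n$, the slack being absorbed by $\varepsilon$). Second, show this optimum is \emph{not} a strong equilibrium by pointing to the diverting coalition of step (ii). Third, prove the global claim that \emph{every} strong equilibrium has revenue $\bigO(1)$: I would argue that any clearing state keeping more than a constant number of firms with positive assets admits a diverting coalition, so that iterating coalition‑proofness collapses all but one short loop to zero flow. Here the lattice structure of clearing states (Theorem~\ref{thm:clearingLattice}) and the circulation representation (Proposition~\ref{prop:sFull}) are the tools: I would phrase each deviation as rerouting flow in the circulation network and use monotonicity of the fixed‑point map to certify that the deviating coalition reaches a clearing state in which each of its members is strictly better off. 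Fourth, exhibit one surviving‑loop profile and verify directly that it \emph{is} a strong equilibrium with constant revenue, so the strong price of stability is finite and equals $\rev(\OPT)/\rev(\text{best strong eq.}) \ge n/2 - \varepsilon$ for the chosen instance.

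The main obstacle is the global claim in the third step: I must guarantee that for \emph{every} strong equilibrium, not merely for the optimum, there is a coalition all of whose members \emph{strictly} improve. The subtlety is twofold. First, the capacity and no‑fraud constraints make the post‑deviation revenue‑maximal clearing state depend delicately on the weights, so the weight asymmetry must be set precisely enough that diversion strictly helps every coalition member rather than leaving some indifferent (indifference would not violate the strong‑equilibrium condition). Second, I must rule out ``clever'' equilibria sustaining a partial circulation along the main cycle; this is exactly where the impossibility of splitting a payment is essential, and I expect the bulk of the argument to be a careful case analysis showing that any attempt to keep a long circulation alive leaves at least one firm below its loop value, which then triggers a diverting coalition and drives the system down to a single short cycle.
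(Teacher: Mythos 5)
Your high-level architecture is the right one---and it is in fact the paper's: a social optimum circulating money through a cycle that visits essentially all $n$ firms, versus equilibria that only sustain a single $2$-cycle, with weights scaled by a large parameter $M$ so that the additive slack is absorbed into $\varepsilon$. You are also right that no external assets are needed, since the revenue-maximal clearing state supports self-sustaining circulations. However, your gadget has the incentive pointing the wrong way, and this is fatal. If each main-cycle firm's private loop has a return edge of weight $1$, then diverting assets into that loop can never strictly help the firm: the loop partner owes only $1$ back, so everything sent beyond $1$ becomes the partner's surplus and is lost, while the diversion simultaneously kills the main-cycle inflow the firm was enjoying. With such weights the main-cycle profile is itself a strong equilibrium (indeed essentially optimal), so your instance would have strong price of stability close to $1$, not $n/2$. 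For defection to be strictly profitable, the short cycle must have capacity strictly \emph{exceeding} the flow the firm receives on the long cycle; this is exactly how the paper arranges it: a cycle $v_1 \to v_2 \to \dots \to v_n \to v_1$ with edges of weight $M$ (and $M+1$ on a few edges) plus one chord $(v_1,v_n)$ of weight $M+1$, so that $v_1$ obtains $M+1$ by feeding the $2$-cycle but only $M$ by feeding the long cycle.

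Second, your step three---showing that \emph{every} strong equilibrium of a construction with $\Theta(n)$ strategic firms has revenue $\bigO(1)$---is left as an admitted obstacle, and it is the entire difficulty; building on the gadget of Fig.~\ref{fig:noNash} moreover risks destroying existence of equilibria altogether (that gadget's role in the paper is precisely non-existence, and it is what drives the \emph{unbounded pure-Nash} price of stability in Proposition~\ref{prop:posExtrnalInflow}, not this bound). The paper sidesteps the global case analysis by making the instance minimal: every firm except $v_1$ has a single outgoing edge, hence no strategic choice, so up to symmetry there are only two profiles to compare. One checks directly that the profile in which $v_1$ ranks $(v_1,v_n)$ first is the unique Nash (hence unique strong) equilibrium, with revenue $2M+2$, while the alternative profile has revenue $nM$; thus the strong price of stability is $nM/(2M+2) = n/2 - n/(2M+2) \ge n/2 - \varepsilon$ once $M \ge n/(2\varepsilon)-1$. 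If you reverse your gadget's weight asymmetry and then prune the construction to a single strategic firm, your outline collapses to this proof; as written, it does not go through.
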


\begin{proof}
We construct an edge-ranking game that consists of a single cycle plus one additional edge. More formally, we have firms $V=\{v_1, \dots, v_n\}$ and edges $E=\Big\{\big\{(v_i,v_{i+1}) \mid i \in \{1,\dots,n-1\}\big\} \cup (v_n,v_1) \cup (v_1,v_n)\Big\}$. The edges $(v_1,v_n), (v_n,v_1),(v_1,v_2)$ have a capacity $M+1$ and all other edges a capacity of $M$. The only node with more than a single outgoing edge is $v_1$. If $\pi_{v_1}=((v_1,v_n),(v_1,v_2))$, player $v_1$ gets a total supply of $M+1$, which is optimal. Observe that $\rev(\vecpi) = 2M+2$, and that $\vecpi$ is the only Nash equilibrium and the only strong equilibrium.

In contrast, for profile $\vecpi'$ with $\pi'_{v_1}=((v_1,v_2),(v_1,v_n))$, firm $v_1$ only gets a utility of $M$, but $\rev(\vecpi') = nM$. Thus, the strong price of stability is at least $nM/(2M+2) = n/2 - n/(2M + 2)$, which is at least $n/2 - \varepsilon$ for $M \ge n/(2\varepsilon) - 1$.
\end{proof}

\begin{proposition}
The price of stability for Nash equilibria in terms of \rev\ is unbounded in edge-ranking games.
\label{prop:posExtrnalInflow}
\end{proposition}

\begin{figure}[t]
\begin{center}
\begin{tikzpicture}[auto,swap,scale=0.745]
\node[vertex](v1) at (0,0){$v_1$};
\node[vertex](v4) at (-2,-2){$v_4$};
\node[vertex,label=above:\fbox{2}](v2) at (-4,0){$v_2$};
\node[vertex](v6) at (-6,0){$v_5$};
\node[vertex](v7) at (2,-2){$v_6$};
\node[vertex,label=above:\fbox{2}](v3) at (4,0){$v_3$};
\node[vertex](v9) at (6,0){$v_7$};
\node[vertex,label=above:\fbox{1}] (w1) at (8,1.5) {$w_1$};
\node[vertex] (w2) at (8,-1.5) {$w_2$};
\node[vertex] (w3) at (10,0) {$w_3$};
\path[edge] (v1) edge [above left] node{$4$} (v4);
\path[edge] (v4) edge [left] node {$2$}(v2);
\path[edge] (v2) edge [above] node{$6$}(v1);
\path[edge] (v2) edge [bend left = 20, below] node{$6$} (v6);
\path[edge] (v6) edge [bend left = 20, above] node{$1$} (v2);
\path[edge] (v1) edge node[above right] {$4$} (v7);
\path[edge] (v7) edge [right] node {$2$}(v3);
\path[edge] (v3) edge [above] node{$6$}(v1);
\path[edge] (v3) edge [bend right = 20, below] node{$6$} (v9);
\path[edge] (v9) edge [bend right = 20, above] node{$1$} (v3);
\path[edge] (w1) edge [right] node{$M$} (w2);
\path[edge] (w2) edge [below right] node{$M$} (w3);
\path[edge] (w3) edge [above right] node{$M$-2} (w1);

\path[edge] (w1) edge [above left] node{$2$} (v9);
\path[edge] (w2) edge [below left] node{$2$} (v9);

\end{tikzpicture}
\end{center}
\caption{An edge-ranking game with unbounded price of stability for Nash equilibria.}
\label{fig:posExternalInflow}
\end{figure}%

\begin{proof}
Consider the game in Fig.~\ref{fig:posExternalInflow}, which extends the game without pure equilibrium from Fig.~\ref{fig:noNash}. We add three nodes. $w_1$ has fixed supply $1$, $w_2$ and $w_3$ no fixed supply. These nodes are involved in a cycle $C$ of edges $(w_1, w_2)$ and $(w_2,w_3)$ with capacity $M \gg 2$, as well as edge $(w_3,w_1)$ with capacity $M-2$. In addition, there are edges $(w_1,v_9)$ and $(w_2,v_9)$ of capacity 2.

In an optimal circulation, $w_1$ and $w_2$ prioritize the edges of $C$, leading to utilities of $\Theta(M)$. In contrast, a pure Nash equilibrium can only exist if the $w$-nodes ensure that the fixed supply of $w_1$ is forwarded to $v_7$, in which case a Nash equilibrium can exist (as observed in the proof of Theorem~\ref{thm:edgeExistNPC}). Clearly, both $w_1$ and $w_2$ have an incentive to deviate towards $C$. Hence, if \emph{either $w_1$ or $w_2$} places the edge to $v_7$ in first rank and the other does not, a unilateral deviation suffices to close $C$ -- thereby leaving the $v$-nodes with instability. However, if \emph{both $w_1$ and $w_2$} play strategies $\pi_{w_1} = ((w_1,v_7),(w_1,w_2))$ and $\pi_{w_2} = ((w_2,v_7),(w_2,w_3))$, no unilateral deviation can lead to flow along $C$. In this case, a pure Nash equilibrium evolves. Obviously the total flow in this equilibrium is at most a constant. Hence, the price of stability is as large as $\Omega(M)$.
\end{proof}

\section{Conclusions.}
\label{sec:conclude}

In this paper, we have proposed and analyzed flow allocation games. Our main results show that in these games, if firms are following priority rankings over units of flow (i.e., unit-ranking strategies), there is always a strong equilibrium. Moreover, it can be computed in strongly polynomial time and represented in polynomial space. More generally, for a large class of \emph{flow-monotone} quality functions $\sw$, there is even an \emph{optimal} strong equilibrium, i.e., the price of stability for Nash and strong equilibria in terms of $\sw$ is 1. In terms of computational complexity, the properties of such optimal strong equilibria depend highly on the quality function $\sw$. While for some objectives such as \rev\ the optimization problem of finding an optimal strong equilibrium can be solved in polynomial time, for other objectives such as \textsc{FSN} the optimization problem can become strongly \classNP-hard. Alternatively, when restricting the strategy spaces to priorities over edges (i.e., edge-ranking strategies), pure Nash and strong equilibria can be absent, and even deciding their existence is a \classNP-hard problem. 

As a concrete example of a quality function, our results shed further light on the performance of equilibria in terms of \rev. When considering decentralized clearing and arbitrary strong equilibria in unit-ranking games, the price of anarchy for strong equilibria depends on the length of cycles in the money circulation of an optimum profile. For pure Nash equilibria, the deterioration in \rev\ can be severe due to the lack of coordination among firms. For edge-ranking strategies, pure Nash and strong equilibria can have very poor quality in terms of \rev\ when they exist.

Our work provides a game-theoretic perspective on clearing in financial networks. In the context of financial networks, unit-ranking functions allow a centralized market regulator to obtain a clearing state, in which a desired social quality is maximized and no coalition of firms gets an incentive to deviate. As such, our work reveals an intersting alignment of incentives -- firms (even \emph{groups} of firms) share an intrinsic interest to implement these proposed clearing payments. This \emph{strategic robustness} represents an elegant game-theoretic complement to standard \emph{legal enforcement} by regulators in financial networks.

More generally, while we have initiated the analysis of flow allocation games, numerous recent follow-up works in the context of financial networks (c.f.\ our discussion in Section~\ref{sec:related}) show that our work has sparked interest in a variety of directions. Understanding these issues continues to be of vital importance to improve the financial system and to inform the discussion about financial regulation from a computational and game-theoretic perspective.

\section*{Acknowledgments.}
The authors thank Pascal Lenzner and Steffen Schuldenzucker for valuable discussions and feedback on the results of this paper. The authors acknowledge financial support by DFG Research Group ADYN under grant DFG 411362735. NB thanks Dr.~h.~c.~Maucher for funding his position. 

\bibliographystyle{plainnat}
\bibliography{literature}

\end{document}